\documentclass[11pt,a4paper]{article}
\pdfoutput=1
\usepackage[a4paper,margin=2.5cm]{geometry}
\usepackage[utf8]{inputenc}
\usepackage[T1]{fontenc}
\usepackage{lmodern}
\usepackage[bookmarksnumbered=true,linktocpage,hypertexnames=false,colorlinks=true,linkcolor=blue,urlcolor=blue,citecolor=blue,anchorcolor=green,bookmarks=true,pagebackref=true,breaklinks=true,pdfusetitle=true,pdfpagelabels=true,pdfstartview={FitH},pdfnewwindow=true]{hyperref}
\usepackage[dvipsnames]{xcolor}
\usepackage{bbm,physics,microtype,mathrsfs,amsmath,amssymb,amsthm,amsfonts,latexsym,mathtools,graphicx,enumitem,booktabs,bm,xspace,float,mathdots,caption,subcaption,ellipsis,enumitem,mleftright,tikz,comment,algorithm2e,slantsc}
\usepackage[T1]{fontenc}
\usepackage{thm-restate}

\usepackage[english]{babel}
\usepackage[capitalize,nameinlink]{cleveref}
\usepackage{authblk}
\newtheorem{theorem}{Theorem}[section]\crefname{theorem}{Theorem}{Theorems}
\newtheorem{lemma}[theorem]{Lemma}\crefname{lemma}{Lemma}{Lemmas}
\newtheorem{proposition}[theorem]{Proposition}\crefname{proposition}{Proposition}{Propositions}
\newtheorem{corollary}[theorem]{Corollary}\crefname{corollary}{Corollary}{Corollaries}
\crefname{fact}{Fact}{Facts}
\newtheorem{claim}[theorem]{Claim}\crefname{claim}{Claim}{Claims}
\crefname{example}{Example}{Examples}
\newtheorem{problem}[theorem]{Problem}\crefname{problem}{Problem}{Problems}
\theoremstyle{definition}
\newtheorem{definition}[theorem]{Definition}\crefname{definition}{Definition}{Definitions}
\theoremstyle{remark}
\crefname{remark}{Remark}{Remarks}
\theoremstyle{plain}
\newcommand{\thistheoremname}{}
\newtheorem{genericthm}[theorem]{\thistheoremname}

\DeclareMathOperator{\polylog}{polylog}
\DeclareMathOperator{\supp}{supp}
\DeclareMathOperator{\TV}{TV}

\newcommand{\eps}{\varepsilon}
\newcommand{\R}{\mathbbm R}

\newcommand{\SL}{\mathsf{SL}}
\renewcommand{\L}{\mathsf{L}}

\newcommand{\E}{\mathop{\bf E\/}}

\renewcommand{\Pr}{\mathop{\bf Pr\/}}
\numberwithin{equation}{section}
\allowdisplaybreaks[4]
\urlstyle{same}
\graphicspath{{./images/}}
\RestyleAlgo{boxruled}
\LinesNumbered

\newcommand{\eq}[1]{\hyperref[eq:#1]{(\ref*{eq:#1})}}
\renewcommand{\sec}[1]{\hyperref[sec:#1]{Section~\ref*{sec:#1}}}
\newcommand{\thm}[1]{\hyperref[thm:#1]{Theorem~\ref*{thm:#1}}}
\newcommand{\lem}[1]{\hyperref[lem:#1]{Lemma~\ref*{lem:#1}}}
\newcommand{\cor}[1]{\hyperref[cor:#1]{Corollary~\ref*{cor:#1}}}
\newcommand{\app}[1]{\hyperref[app:#1]{Appendix~\ref*{app:#1}}}
\newcommand{\tabl}[1]{\hyperref[tab:#1]{Table~\ref*{tab:#1}}}
\newcommand{\defin}[1]{\hyperref[def:#1]{Definition~\ref*{def:#1}}}
\newcommand{\fig}[1]{\hyperref[fig:#1]{Figure~\ref*{fig:#1}}}
\newcommand{\clm}[1]{\hyperref[clm:#1]{Claim~\ref*{clm:#1}}}
\newcommand{\conj}[1]{\hyperref[conj:#1]{Conjecture~\ref*{conj:#1}}}
\newcommand{\rem}[1]{\hyperref[rem:#1]{Remark~\ref*{rem:#1}}}
\newcommand{\probl}[1]{\hyperref[prob:#1]{Problem~\ref*{prob:#1}}}
\newcommand{\thmthm}[2]{\hyperref[thm:#1]{Theorem~\ref*{thm:#1}} and~\hyperref[thm:#2]{\ref*{thm:#2}}}
\newcommand{\lemlem}[2]{\hyperref[lem:#1]{Lemma~\ref*{lem:#1}} and~\hyperref[lem:#2]{\ref*{lem:#2}}}

\begin{document}

%=============================================================================
\title{(No) Quantum space-time tradeoff for USTCON}
\author[1]{Simon Apers}
\author[2]{Stacey Jeffery\thanks{Supported by ERC STG grant 101040624-ASC-Q, NWO Klein project number OCENW.Klein.061, and ARO contract no W911NF2010327. SJ is a CIFAR Fellow in the Quantum Information Science Program.}}
\author[3,4]{Galina Pass\thanks{Supported by the National Agenda for Quantum Technologies (NAQT), as part of the Quantum Delta NL programme.}}
\author[4]{Michael Walter\thanks{Supported by the European Research Council~(ERC) through ERC Starting Grant 101040907-SYMOPTIC, the Deutsche Forschungsgemeinschaft (DFG, German Research Foundation) under Germany's Excellence Strategy - EXC\ 2092\ CASA - 390781972, the Federal Ministry of Education and Research (BMBF) through project Quantum Methods and Benchmarks for Resource Allocation (QuBRA), and NWO grant OCENW.KLEIN.267.}}
\affil[1]{CNRS, IRIF, Paris}
\affil[2]{CWI \& QuSoft}
\affil[3]{Korteweg-de Vries Institute for Mathematics \& QuSoft, University of Amsterdam}
\affil[4]{Faculty of Computer Science, Ruhr University Bochum}
\date{}
\maketitle
\begin{abstract}
Undirected $st$-connectivity is important both for its applications in network problems, and for its theoretical connections with logspace complexity. Classically, a long line of work led to a time-space tradeoff of $T=\widetilde{O}(n^2/S)$ for any $S$ such that $S=\Omega(\log (n))$ and $S=O(n^2/m)$.
Surprisingly, we show that quantumly there is no nontrivial time-space tradeoff: there is a quantum algorithm that achieves both optimal time $\widetilde{O}(n)$ and space $O(\log (n))$ simultaneously.
This improves on previous results, which required either $O(\log (n))$ space and $\widetilde{O}(n^{1.5})$ time, or $\widetilde{O}(n)$ space and time.
To complement this, we show that there is a nontrivial time-space tradeoff when given a lower bound on the spectral gap of a corresponding random walk.
\end{abstract}

%=============================================================================
\section{Introduction}
%=============================================================================
For an undirected graph $G=(X,E)$ on $n=\abs{X}$ vertices and $m=\abs{E}$ edges, with $s,t\in X$, $st$-connectivity or \textsc{ustcon} is the problem of deciding whether~$s$ and~$t$ are in the same component.
This problem has applications in many other graph and network problems, and is of theoretical importance for its connection with space complexity (see e.g.~\cite{wigderson1992complexity}).
In particular, \textsc{ustcon} is complete for the class \emph{symmetric logspace}, $\SL$, which was shown to be equal to \emph{logspace}, $\L$, by exhibiting a classical deterministic logspace algorithm for \textsc{ustcon}~\cite{reingold2008SL}.
In this paper, we consider quantum algorithms for this problem.

There are different versions of the problem \textsc{ustcon} depending on how $G$ is accessed.
If~$G$ is given as an adjacency matrix, we denote the problem \textsc{ustcon}$_{\text{mat}}$.
If~$G$ is given as an array of arrays, one for each vertex, enumerating the neighbours, we denote the problem \textsc{ustcon}$_{\text{arr}}$.
\footnote{There are variations on the details of this model.
For now, we allow \textsc{ustcon}$_{\text{arr}}$ to stand in for multiple variations of the array access model, but precise details of the variations can be found in \sec{ustcon}.}
If one only cares about space complexity, these problems are equivalent, but the same is not true of time complexity: adjacency queries can simulate an array query, and vice versa, in logspace, but there is a non-negligible time overhead.

A classical deterministic algorithm based on breadth-first search or depth-first search can solve \textsc{ustcon}$_{\text{arr}}$ in $\widetilde{O}(m)$ time, using $\widetilde{O}(n)$ space.
Using a random walk, the space complexity can be improved to $O(\log (n))$, at the expense of $\widetilde{O}(nm)$ time complexity~\cite{aleliunas1979ustconn}.
A series of works~\cite{broder1989trading,beame1990time,edmonds1993time,barnes1993short,feige1993randomized} culminated in a space-time tradeoff for \textsc{ustcon}$_{\text{arr}}$ of $T=\widetilde{O}(n^2/S)$ queries for any space bound $S = \Omega(\log (n))$ and $S = O(n^2/m)$, due to Kosowski~\cite{MH}.
While there is no matching time-space lower bound, it is unlikely that this tradeoff can be significantly improved (see \cite[Section 5.1 of arXiv v2]{MH} for a discussion).
Kosowski's algorithm is based on using Metropolis-Hastings random walks to find connections between $S$ sampled vertices and $s,t$ until it is becomes possible to conclude that $s$ and $t$ are connected.
For comparison, in the adjacency matrix model, the randomized query complexity of \textsc{ustcon}$_{\text{mat}}$ is $\widetilde{\Theta}(n^2)$ and there is no space-time tradeoff.

A quantum algorithm of D\"urr, Heiligman, H{\o}yer and Mhalla~\cite{DHHM} for \textsc{connectivity} can be adapted to solve \textsc{ustcon}$_{\text{mat}}$ in $\widetilde{O}(n^{1.5})$ time and \textsc{ustcon}$_{\text{arr}}$ in $\widetilde{O}(n)$ time, both of which are optimal up to polylog factors.
Both of these algorithms use $\widetilde{O}(n)$ space, of which all but  $O(\log (n))$ can be classical space (assuming quantum RAM access).
A subsequent quantum algorithm for \textsc{ustcon}$_{\text{mat}}$ due to Belovs and Reichardt uses $\widetilde{O}(n^{1.5})$ time, but only $O(\log (n))$ space~\cite{BR12}, which is optimal in terms of both space and time.  It is also possible to solve \textsc{ustcon}$_{\text{arr}}$ in $O(\log (n))$ space and $\widetilde{O}(\sqrt{nm})$ time, using a quantum walk (see for example~\cite{belovs2013electric}). This quantum walk algorithm requires a quantum version of array access to the input graph, which we refer to as \textsc{ustcon}$_{\text{qw}}$ in the next section.

\begin{table}
    \centering
    \renewcommand*{\arraystretch}{1.5}
    \begin{tabular}{|r|c|c|}
    \hline
     \multicolumn{3}{|c|}{\textsc{ustcon}$_{\text{mat}}$}\\
    \hline
     & Time & TS-tradeoffs \\
    \hline
     Classical & $\widetilde{\Theta}(n^2)$ & $S=O(\log (n))$, $T=\widetilde{O}(n^3/d)$ \\
      & & $S=\widetilde{O}(n)$, $T=\widetilde{O}(n^2)$\\
     \hline
     Quantum & $\widetilde{\Theta}(n^{1.5})$ & $S=O(\log (n))$, $T=\widetilde{O}(n^{1.5})$~\cite{BR12}\\
     \hline
     \multicolumn{3}{|c|}{\textsc{ustcon}$_{\text{arr}}$}\\
    \hline
     & Time & TS-tradeoffs \\
    \hline
    Classical & $\widetilde{\Theta}(m)$ & $T=\widetilde{O}(\max\{n^2/S,m\})$ \\
    \hline
    Quantum  & $\widetilde{\Theta}(n)$ & $S=O(\log (n))$, $T=\widetilde{O}(n^{1.5})$ \\
     & & $S=T=\widetilde{O}(n)$~\cite{DHHM}\\
    \hline
    & & \textbf{This work:} $S=O(\log (n))$, $T=\widetilde{O}(n)$ \\
    \hline
    \end{tabular}
    \caption{A summary of classical (randomized) and quantum time and space complexities for \textsc{ustcon} in the adjacency matrix and adjacency array models. The classical results for \textsc{ustcon}$_\text{mat}$ follow from (1) the $\log (n)$-space result for \textsc{ustcon}$_{\text{arr}}$ with an $n/d$ overhead for finding neighbours of the current vertex in a $d$-regular graph; and (2) BFS.}
    \label{tab:my_label}
\end{table}

%-----------------------------------------------------------------------------
\subsection{Summary of results}
%-----------------------------------------------------------------------------
We describe new quantum walk algorithms for \textsc{ustcon}$_{\text{arr}}$.
These algorithms consider a quantum walk version of the adjacency array model, in which the input graph is accessed by a quantum analogue of classical random walk steps.
Recall that in the adjacency array model, we assume that for any vertex $u$, we can query, for any $i\in [d_u]$, the $i$-th neighbour of $u$, $v_i(u)$.
Then a random walk step can be performed from state $u$ by sampling a uniform $i\in [d_u]$, and then computing~$v_i(u)$, which becomes the current state.
In the \emph{quantum walk access model}, we assume that for any vertex $u$, we can prepare a uniform superposition over the neighbours of~$u$.
While these models are not identical, they are very similar, and in \sec{ustcon}, we formally define the models, and show that quantum walk access can be simulated in the array model with polylogarithmic overhead under reasonable additional assumptions.

Letting \textsc{ustcon}$_{\text{qw}}$ denote the $st$-connectivity problem in the quantum walk access model,
we present a one-sided error quantum algorithm that solves \textsc{ustcon}$_{\text{qw}}$ in time~$\widetilde{O}(n)$ and space~$O(\log (n))$.
Perhaps surprisingly, this means that \textsc{ustcon}$_{\text{qw}}$ admits \emph{no} nontrivial tradeoff between space and time in the quantum setting -- a single algorithm can solve this problem optimally in terms of both time and space (see \thm{single-walk} for the formal result).

\begin{theorem}[Informal]\label{thm:intro1}
There is a $O(\log (n))$-space quantum algorithm that decides $\textsc{ustcon}_{\mathrm{qw}}$ with one-sided error in $\widetilde{O}(n)$ time.
\end{theorem}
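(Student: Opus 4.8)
The plan is to obtain the quadratic quantum speedup by fast-forwarding the classical random-walk strategy that underlies Kosowski's tradeoff, converting the classical budget $T=\widetilde O(n^2/S)$ into a quantum budget that is $\widetilde O(n)$ simultaneously with $O(\log n)$ space. The first ingredient I would build is a \emph{reachability tester}: given two vertices $u,v$ and a length parameter $L$, decide with one-sided error whether $v$ is reached by a length-$L$ random walk started at $u$. Classically this costs $L$ steps and $O(\log n)$ space; quantumly I would implement it by quantum fast-forwarding of the walk operator in the quantum-walk access model, which simulates $L$ steps using $\widetilde O(\sqrt L)$ applications of the walk unitary, hence $\widetilde O(\sqrt L)$ time, and keeps only a single vertex register plus $O(\log n)$ ancillas. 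The correctness guarantee is electric: if $u,v$ are in the same component, a random walk traverses the connecting path within $L=O(\text{commute time})=O(m\,R_{u,v})$ steps, whereas if they lie in different components no walk ever connects them -- which is precisely the source of one-sided error.

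A single fast-forwarded walk directly between $s$ and $t$ only yields $\widetilde O(\sqrt{m\,R_{s,t}})=\widetilde O(n^{1.5})$, so the second ingredient is to shorten the walks using intermediate \emph{beacons}, in the manner of Kosowski. I would sample a set $B$ of $S$ beacons together with $s,t$ so that along any $s$--$t$ path consecutive beacons are within about $(n/S)^2$ walk steps; reachability between neighbouring beacons is then tested with the primitive at length $L\approx (n/S)^2$, costing $\widetilde O(n/S)$ each. Merging the beacons into connected clusters with a Boruvka-style procedure uses these tests as a subroutine, and by the same cancellation totals $\widetilde O(n)$ quantum time: the quadratic walk speedup turns the classical per-walk length $(n/S)^2$ into $n/S$, which against the $\approx S$ walks gives $\widetilde O(S\cdot n/S)=\widetilde O(n)$, independent of $S$. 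This is exactly what collapses the entire classical tradeoff curve to a single point. To hold the space at $O(\log n)$ I would not store $B$ explicitly but define beacon membership by a recomputable (pseudo)random seed, and keep only the current clustering bookkeeping, taking $S$ polylogarithmic so that the partition data fit in $O(\log n)$ space while the time bound is unaffected.

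Finally I would dispose of the unknowns and the error. Since the relevant walk length (equivalently $R_{s,t}$) is not known in advance, I would run the scheme over a geometric sequence of length budgets $L=2,4,\dots,\widetilde O(n^2)$ and accept as soon as a genuine connection between the clusters of $s$ and $t$ is certified; this preserves one-sided error, because a certified connection is always a real path, so disconnected inputs are never accepted while connected inputs are accepted once $L$ passes the relevant commute time, and the tester itself is made one-sided and boosted by amplitude amplification. The step I expect to be the main obstacle is the reachability primitive together with its covering analysis: proving that quantum fast-forwarding delivers a genuinely one-sided $\widetilde O(\sqrt L)$-time, $O(\log n)$-space test, and that with only polylogarithmically many pseudorandomly defined beacons the short per-walk lengths and the Boruvka merging still certify connectivity with high probability. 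Reconciling the $\widetilde O(n)$ time and $O(\log n)$ space budgets \emph{simultaneously}, rather than trading one against the other, is the crux of the theorem.
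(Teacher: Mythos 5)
Your proposal does not follow the paper's route, and it has a genuine gap at its core primitive. Quantum fast-forwarding simulates the \emph{endpoint distribution} of an $L$-step walk: it (approximately) prepares a state encoding $\sigma P^L$ using $\widetilde{O}(\sqrt{L})$ applications of the walk unitary. It does not simulate the \emph{hitting} event ``the walk visits $v$ within $L$ steps.'' Measuring the fast-forwarded state and hoping to see $v$ succeeds with probability roughly $\pi(v)=\Theta(1/n)$ even after the walk has mixed, so your reachability tester is not a constant-success, one-sided, $\widetilde{O}(\sqrt{L})$-time primitive; boosting it costs another polynomial factor and destroys the budget. The tool that actually delivers ``detect a marked vertex in $\widetilde{O}(\sqrt{\mathcal{C}})$ time, $O(\log n)$ space, with one-sided error'' is quantum walk \emph{search} (\cref{thm:quantum_walk_search}), whose cost is governed by the commute time $\mathcal{C}_{s,M}$ rather than by a walk-length parameter you get to choose.

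Once you phrase the cost in terms of commute times, the key idea you are missing becomes visible: for the \emph{simple} random walk, $\mathcal{C}_{s,t}$ can be $\Theta(n^3)$ (e.g.\ the lollipop graph), which is exactly why the direct $s$--$t$ search gives only $\widetilde{O}(n^{1.5})$ --- and your beacons do not repair this, because the pairwise commute times between consecutive beacons under the simple walk are just as uncontrolled. Kosowski's covering and walk-length guarantees are proved for his \emph{cautious} (Metropolis--Hastings-type) walk, not for the simple walk, and your plan never changes the walk. The paper's fix is precisely to change the walk: run quantum walk search on the weighted, edge-subdivided Metropolis--Hastings graph $G'$ of \cref{def:MH}, for which the maximum hitting time between any two connected vertices is $O(n^2)$ (\cref{lem:MH}) and whose weighted walk oracle is implementable with $\widetilde{O}(1)$ queries and gates (\cref{lem:Szegedy_implementation}). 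With $\mathcal{C}=O(n^2)$, a \emph{single} quantum walk search from $s$ with marked set $\{t\}$ already runs in $\widetilde{O}(n)$ time and $O(\log n)$ space, and is one-sided because in the disconnected case the search provably returns an unmarked vertex. The entire beacon/Boruvka/pseudorandom-seed apparatus --- whose correctness with only polylogarithmically many recomputable beacons you would still have to establish --- is unnecessary.
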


In this paper, when we say \emph{time}, we are counting: (1) quantum gates (unitaries that act on at most a constant number of qubits); (2) quantum walk queries to $G$; and (3) (quantum) random access (QCRAM) operations (QCRAM is used in our second algorithm only, see below).
Inspired by~\cite{MH}, our algorithm is based on a quantum walk search for~$t$ starting from~$s$ using a random walk that can be interpreted as a Metropolis-Hastings random walk.

Because of the close relationship between \textsc{ustcon} and classical logspace, we can consider what this means for logspace problems in general.
It does not mean that more space does not reduce the quantum time complexity of \emph{any} problem, but it is interesting to consider:
in what settings do we get a non-trivial time-space tradeoff?
We consider one such setting: when we are given a promise on the spectral gap or mixing time of the random walk on $G$ (see \sec{random-walks}).
In that case, we prove the following theorem (see \thm{trade-off} for the formal result).

\begin{theorem}[Informal]\label{thm:intro2}

Suppose whenever $s$ and $t$ are connected, the random walk spectral gap is at least $\delta > 0$.
For any $S \in \Omega(\log (n))$, there is a quantum algorithm that decides $\textsc{ustcon}_{\mathrm{qw}}$ with bounded error in $O(S)$ space and $T=\widetilde{O}\left(\frac{S}{\delta}+\sqrt{\frac{n}{\delta S}}\right)$ time.
\end{theorem}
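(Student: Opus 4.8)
The plan is to phrase the problem as a single quantum walk search and to spend the $S$ units of space on a set of random \emph{landmarks} that both raise the density of marked vertices and summarise the coarse connectivity structure of $G$. Concretely, I would first sample a set $R$ of $\Theta(S)$ vertices uniformly at random and store it, together with $s$ and $t$, in $O(S)$ classical/QCRAM memory. The guiding observation is that, conditioned on $s$ and $t$ being connected, the promise guarantees that the natural random walk on the component $C_s=C_t$ has spectral gap at least $\delta$, so it mixes in time $\widetilde O(1/\delta)$; and since the landmarks are uniform, a $\Theta(S/n)$ fraction of every component consists of landmarks. Deciding $st$-connectivity then reduces to two primitives: from a given vertex, find a landmark lying in its component; and decide whether two landmarks lie in the same component. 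The first is a quantum walk search for a marked set of density $\epsilon=\Theta(S/n)$ in a walk of gap $\delta$, while the second is precisely what the landmark data structure is built to answer.

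Next I would invoke the quantum walk search framework already used for \thm{intro1}, whose cost decomposes as $\mathsf{S}+\tfrac{1}{\sqrt{\epsilon}}\big(\tfrac{1}{\sqrt{\delta}}\,\mathsf{U}+\mathsf{C}\big)$, where $\mathsf S,\mathsf U,\mathsf C$ are the setup, update and checking costs. The update and checking steps are single quantum walk queries and comparisons against the stored landmark set, hence $\mathsf U,\mathsf C=\widetilde O(1)$. With $\epsilon=\Theta(S/n)$ and gap $\delta$ this yields a search term $\tfrac{1}{\sqrt{\epsilon\delta}}=\widetilde O\!\big(\sqrt{n/(\delta S)}\big)$, matching the second term. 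The first term $\widetilde O(S/\delta)$ I would attribute to the one-time setup that turns the $\Theta(S)$ raw landmarks into a usable connectivity summary: each landmark is processed by running the walk for $\widetilde O(1/\delta)$ steps (one mixing time) so that its component membership can be certified and compared, for a total of $\widetilde O(S/\delta)$. Everything is stored in $O(S)$ space, giving the stated space bound.

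For correctness I would split on component size. If $C_s$ is \emph{large}, say $|C_s|\gtrsim n/S$, then with high probability it contains a landmark, the gap promise lets the search find one within budget, and comparing the landmark summaries of $s$ and of $t$ certifies whether $C_s=C_t$. If $C_s$ is \emph{small}, $|C_s|\lesssim n/S$, it may contain no landmark, but then a direct quantum walk search from $s$ for the single target $t$ has marked fraction $\gtrsim 1/|C_s|$ and gap $\delta$, hence cost $\widetilde O(\sqrt{|C_s|/\delta})=\widetilde O(\sqrt{n/(\delta S)})$, within the same budget. Running both variants and taking $O(\log n)$ repetitions drives the success probability to a constant and then to $1-o(1)$. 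Crucially, when $s\not\sim t$ the two landmark summaries necessarily disagree, since a landmark lies in a unique component, and the direct search never reaches $t$; the algorithm therefore never accepts a disconnected instance even though the gap is only promised in the connected case, yielding the claimed bounded (two-sided) error.

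I expect the main obstacle to be the certification step underlying the $\widetilde O(S/\delta)$ term: relating the \emph{single-source} searches to the genuinely \emph{pairwise} question of whether $s$ and $t$ share a component, without ever paying the $\widetilde O(\sqrt{n/\delta})$ cost of searching for one prescribed landmark starting from another. The natural fix is that within a component the landmark graph is a clique, so it suffices to assign each component a canonical landmark label and compare labels; the delicate point is computing these labels coherently in $\widetilde O(1/\delta)$ per landmark and arguing that the marked fraction stays $\Theta(S/n)$ uniformly across components despite fluctuations in how many landmarks a component receives. A secondary technical point is bounding the setup of the walk search to $\widetilde O(1/\delta)$ per landmark, rather than, say, the full cover time, using only the spectral-gap promise, and aligning the large- and small-component regimes so their thresholds meet cleanly at $|C|\approx n/S$.
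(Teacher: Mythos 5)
There is a genuine gap in your proposal, and it is exactly the step you flag as ``the main obstacle'': you never actually give a mechanism for deciding whether two landmarks (or $s$ and a landmark, or $s$ and $t$) lie in the same component. Running a random walk from a landmark for one mixing time $\widetilde O(1/\delta)$ produces a near-stationary \emph{sample} from that landmark's component; it does not produce a canonical label. Two landmarks in the same component yield two independent samples from $\pi_C$, which coincide only with the collision probability $\norm{\pi_C}_2^2$, which can be as small as $1/\abs{C}$ --- so comparing endpoints of mixing-time walks does not certify co-membership, and turning the ``landmark clique'' into canonical labels requires solving connectivity among the landmarks, which is the problem you started with. The $\widetilde O(S/\delta)$ term cannot be attributed to this (circular) certification. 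A secondary issue is that uniformly random landmarks need not intersect $X_s$ at all, and your fallback for small components presumes knowledge of $\abs{C_s}$ and a marked fraction $1/\abs{C_s}$ for the single target $t$, neither of which is available (for a weighted or irregular walk, $\pi(t)$ restricted to $C_s$ can be much smaller than $1/\abs{C_s}$).

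The paper resolves precisely this pairwise-comparison difficulty by a different decomposition. It seeds with endpoints of $O(p)$ classical random walks run from $s$ and from $t$ for a mixing time each (this is where the $\widetilde O(S/\delta)$ term comes from, and it guarantees $L\subseteq X_s$, $M\subseteq X_t$ with $\pi(L),\pi(M)\geq p/(8n)$ with high probability); it then uses \emph{inverse} quantum walk search to amplify $\ket{\pi_L}$ and $\ket{\pi_M}$ into the full component states $\ket{\pi_{X_s}}$ and $\ket{\pi_{X_t}}$ in time $\widetilde O(\sqrt{n/(p\delta)})$; and finally it compares these two states with a SWAP test, exploiting that they are equal when $s\sim t$ and orthogonal otherwise. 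The SWAP test on global component states is the ingredient that replaces your label-comparison step, and it is the piece your argument is missing. (Note also that the resulting error is two-sided, not the one-sided guarantee you claim.)
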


\noindent
Our algorithm adapts~\cite{apers2019qsampling} in a way that allows exploiting the spectral gap promise.
The time bound decreases monotonically for $S \in \Omega(\log n)$ until $S \in O((n\delta)^{1/3})$, at which point it reaches time complexity $T = \widetilde O(n^{1/3}/\delta^{2/3})$.
We leave it as an open problem to prove a matching lower bound (at least for some values of $\delta$), which would prove that in certain regimes, it is not possible to achieve optimal time and space simultaneously.

In the space bound $S$ of both algorithms in \cref{thm:intro1,thm:intro2}, only $O(\log (n))$ memory needs to be actual quantum workspace (i.e., qubits).
The remaining $O(S)$ memory can be classical RAM in the first algorithm and QCRAM in the second algorithm, that is, classical RAM that is queryable at a quantum superposition of addresses.
We discuss the latter in \cref{subsec:qram}.

We summarize our results in \cref{fig:diagram}. For $S=\log(n)$, the algorithm of \thm{intro2} has a worse time complexity than the algorithm of \thm{intro1}, whenever $\delta<\frac{1}{n}$. We leave it as an open problem to give a single algorithm that is optimal for all $\delta$.

\begin{figure}[htb]
\centering
\includegraphics[width=.65\textwidth]{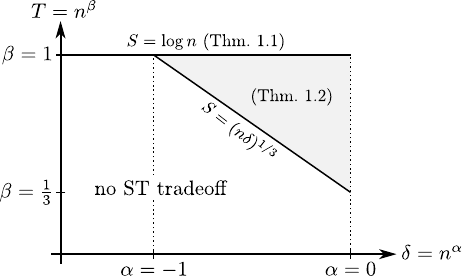}
\caption{Quantum space-time tradeoffs for USTCON, with axes representing the time complexity and spectral gap promise (up to polylog-factors). The grey area represents the regime in which a non-trivial tradeoff is achieved. \cref{thm:intro1} (upper line) corresponds to the regime with space $S = O(\log n)$ and time $T = \widetilde O(n)$. \cref{thm:intro2} (grey area) corresponds to the regime with a promise on $\delta$, and interpolates between $S = O(\log n)$ and $T = \widetilde O(n)$, and $S = O((n\delta)^{1/3})$ and $T = \widetilde O(n^{1/3}/\delta^{2/3})$.}
\label{fig:diagram}
\end{figure}

\paragraph{Organization:} The remainder of this paper is organized as follows. We describe preliminaries in \sec{prelim} and \sec{ustcon}. In \sec{single-walk}, we prove \thm{intro1} by exhibiting a quantum algorithm for \textsc{ustcon}$_{\text{qw}}$ that is optimal in both time and space. For completeness, we also include a proof of a corresponding lower bound in \sec{lower-bound}. In \sec{tradeoff}, we prove \thm{intro2} exhibiting a quantum time-space tradeoff when given a promise on the spectral gap.

%=============================================================================
\section{Preliminaries}\label{sec:prelim}
%=============================================================================
We first give some general notation.
For a positive integer $k$, we let $[k]=\{1,\dots,k\}$.
Throughout this work, $n$ denotes the number of vertices and $m$ the number of edges of the input graph.
For any function $f$, we let $\widetilde{O}(f(n)) = f(n)\cdot{\polylog}(n)$.

%-----------------------------------------------------------------------------
\subsection{Probability theory}\label{sec:prob}
%-----------------------------------------------------------------------------

A \emph{(probability) distribution} on a finite set~$X$ is a non-negative function~$\sigma\colon X\to\R_{\geq0}$ such that $\sum_{v \in X} \sigma(v) = 1$.
Its \emph{support} is defined as $\supp(\sigma) := \{ v \in X : \sigma(v) > 0 \}$.
We will implicitly identify such~$\sigma$ with \emph{row} vectors, as is customary in the random walk literature.
To any distribution $\sigma$, we also associate a quantum state~$\ket\sigma:=\sum_{v\in X}\sqrt{\sigma(v)}\ket{v}$.
Measuring $\ket{\sigma}$ in the standard basis returns a sample from $\sigma$.

For any distribution $\sigma$ on $X$, and any subset $M\subseteq X$, we will let $\sigma(M)=\sum_{u\in M}\sigma(u)$.
We let~$\sigma_M$ denote the \emph{normalized restriction} of $\sigma$ to $M$, defined by $\sigma_M(u) = \sigma(u)/\sigma(M)$ for all~$u\in M$ and $\sigma_M(u) = 0$ elsewhere.

Finally, the \emph{total variation distance} between two distributions $\sigma$ and $\tau$ on $X$ is defined as
\[
  \norm{\sigma-\tau}_{\TV}
:= \frac{1}{2}\sum_{u\in X} \abs{ \sigma(u)-\tau(u) }
= \max_{A \subseteq X} \abs{ \sigma(A) - \tau(A) }.
\]

%-----------------------------------------------------------------------------
\subsection{Random walks}\label{sec:random-walks}
%-----------------------------------------------------------------------------

Fix an undirected graph $G=(X,E)$ with $n=\abs{X}$ vertices and $m=\abs{E}$ edges.
We take $E \subseteq \binom X 2$, that is, edges~$e \in E$ are subsets~$e = \{u,v\} = \{v,u\}$ of pairs of vertices.
We will let
\[ N(u):=\{v\in X:\{u,v\}\in E\} \]
denote the neighbourhood of $u\in X$, and $d_u = \abs{N(u)}$ the degree of $u$.
For convenience we assume that all vertices have positive degree.

Fix edge weights given by a symmetric matrix $W \in \R_{\geq0}^{X \times X}$ such that $W_{u,v}=W_{v,u}$ for all $u,v\in X$, and~$W_{u,v} > 0$ if and only if $\{u,v\} \in E$. Then $G=(X,E,W)$ defines a \emph{weighted graph}. When no $W$ is given, the graph is \emph{unweighted} and we let $W_{u,v}=1$ for all $\{u,v\}\in E$.
For~$u \in X$, define $w_u = \sum_{v \in X} W_{u,v}$.
The corresponding \emph{(weighted) random walk} is the reversible Markov chain on $X$ with \emph{transition matrix}~$P \in \R_{\geq0}^{X \times X}$ given by
\begin{align}\label{eq:transmat}
    P_{u,v} = \begin{cases}
      \frac{W_{u,v}}{w_u} & \text{if $\{u,v\} \in E$} \\
      0 & \text{otherwise}
    \end{cases}
      \qquad \forall u, v \in X.
\end{align}
This means that the probability of moving from the vertex~$u$ along an edge to a neighbouring vertex~$v$ is proportional to the edge's weight.
In the unweighted case, this is called the \emph{simple random walk}; in each step it simply moves to a neighbouring vertex chosen uniformly at random.

Let $\pi\in\R^X_{>0}$ be the distribution defined by
\begin{align*}
    \pi(u) = \frac{w_u}{{\cal W}(G)} \qquad \forall u \in X,
\end{align*}
where ${\cal W}(G) = \sum_{u \in X} w_u = \sum_{u,v \in X} W_{u,v}$.
In the unweighted case, $\pi$ is proportional to the degree.
The distribution~$\pi$ is a \emph{stationary} distribution of the random walk, i.e., $\pi P = \pi$ (it is a left eigenvector of~$P$ with eigenvalue~$1$).

In fact, when the graph~$G$ is connected, $\pi$ is also the \emph{unique} stationary distribution of~$P$.
If in addition the graph is not bipartite, then all other eigenvalues have absolute value strictly less than one.
That is, if $1 = \lambda_1 \geq \dots \geq \lambda_n \geq -1$ are the eigenvalues of~$P$ then the \emph{(absolute) spectral gap} $\gamma_\star=\gamma_\star(G) := \min \{ 1 - \abs{\lambda_j} : j = 2,\dots,n \} = \min \{ 1 - \lambda_2, 1 + \lambda_n \}$ is strictly positive.
Importantly, the inverse of the spectral gap bounds the random walk's \emph{mixing time}, that is, the time required for convergence to the stationary distribution:

\begin{theorem}[{\cite[Thm.~12.4]{levin2017markov}}]\label{thm:mix}
Assume $G$ is connected and not bipartite.
Let $\eps>0$ and
\begin{align*}
  t \geq \frac1{\gamma_\star} \log \left(\frac 1 {\eps \pi_{\min}}\right),
\end{align*}
where $\pi_{\min} = \min_{u\in X} \pi(x)$.
Then $\norm{ \sigma P^t - \pi }_{\TV} \leq \eps$ for any distribution $\sigma$ on $X$.
\end{theorem}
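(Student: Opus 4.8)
The plan is to prove this via the spectral decomposition of the reversible walk, which is the standard route to mixing bounds of this form. First I would record that the walk is reversible with respect to~$\pi$: detailed balance $\pi(u)P_{u,v}=\pi(v)P_{v,u}$ is immediate from \eq{transmat}, since both sides equal $W_{u,v}/{\cal W}(G)$. Reversibility says that $P$ is self-adjoint for the inner product $\langle f,g\rangle_\pi=\sum_u\pi(u)f(u)g(u)$, or equivalently that the conjugated matrix $A:=D_\pi^{1/2}PD_\pi^{-1/2}$, with $D_\pi=\diag(\pi)$, is symmetric. Hence $A$ admits an orthonormal real eigenbasis $\phi_1,\dots,\phi_n$ with eigenvalues $\lambda_1,\dots,\lambda_n$, which coincide with those of $P$ since $A$ and $P$ are similar. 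The connectedness and non-bipartiteness hypotheses guarantee $1=\lambda_1>\lambda_j$ and $\lambda_j>-1$ for $j\ge2$, so $\lambda_\star:=\max_{j\ge2}\abs{\lambda_j}=1-\gamma_\star<1$, and the top eigenvector is $\phi_1=(\sqrt{\pi(u)})_u$.

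Next I would transport this to $P^t=D_\pi^{-1/2}A^tD_\pi^{1/2}$. Expanding $A^t=\sum_j\lambda_j^t\,\phi_j\phi_j^{\top}$ and peeling off the $j=1$ term, which reproduces $\pi$, gives
\[
  (P^t)_{u,v}-\pi(v)=\frac{\sqrt{\pi(v)}}{\sqrt{\pi(u)}}\sum_{j\ge2}\lambda_j^t\,\phi_j(u)\phi_j(v).
\]
The key estimate is a $\chi^2$ (that is, $L^2(\pi)$) bound from a fixed start~$u$. Using orthonormality of the $\phi_j$ together with $\sum_j\phi_j(u)^2=1$, a short computation yields
\[
  \sum_v\frac{\abs{(P^t)_{u,v}-\pi(v)}^2}{\pi(v)}=\frac1{\pi(u)}\sum_{j\ge2}\lambda_j^{2t}\phi_j(u)^2\le\frac{\lambda_\star^{2t}}{\pi(u)}\le\frac{\lambda_\star^{2t}}{\pi_{\min}}.
\]

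I would then convert this $L^2$ bound into total variation by Cauchy--Schwarz against $\sum_v\pi(v)=1$:
\[
  2\norm{P^t(u,\cdot)-\pi}_{\TV}=\sum_v\abs{(P^t)_{u,v}-\pi(v)}\le\sqrt{\sum_v\frac{\abs{(P^t)_{u,v}-\pi(v)}^2}{\pi(v)}}\le\frac{\lambda_\star^t}{\sqrt{\pi_{\min}}}.
\]
Bounding $\lambda_\star^t=(1-\gamma_\star)^t\le e^{-\gamma_\star t}$ and inserting $t\ge\frac1{\gamma_\star}\log\frac1{\eps\pi_{\min}}$ gives $e^{-\gamma_\star t}\le\eps\pi_{\min}$, so the right-hand side is at most $\eps\sqrt{\pi_{\min}}\le\eps$; hence $\norm{P^t(u,\cdot)-\pi}_{\TV}\le\eps$ for every fixed start~$u$. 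Finally, for an arbitrary initial distribution~$\sigma$, writing $\sigma P^t-\pi=\sum_u\sigma(u)\bigl(P^t(u,\cdot)-\pi\bigr)$ and applying the triangle inequality (convexity of $\norm{\cdot}_{\TV}$) reduces to the worst fixed start, which finishes the argument.

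The linear-algebra bookkeeping is routine; the only steps needing care are invoking reversibility correctly to obtain the clean spectral sum in the $\chi^2$ estimate (this is where self-adjointness of $P$ in $L^2(\pi)$, and not just the existence of eigenvalues, is used), and tracking the $\pi_{\min}$ factors through the Cauchy--Schwarz step so that the stated threshold on~$t$ indeed suffices. Everything else follows from the spectral gap controlling the geometric decay of the non-stationary modes.
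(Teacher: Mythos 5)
Your argument is correct: detailed balance follows from \eq{transmat}, the symmetrization $D_\pi^{1/2}PD_\pi^{-1/2}$ gives the spectral expansion, the $\chi^2$/Cauchy--Schwarz step yields $2\norm{P^t(u,\cdot)-\pi}_{\TV}\le(1-\gamma_\star)^t/\sqrt{\pi_{\min}}$, and convexity of the total variation norm handles an arbitrary initial distribution. The paper does not prove this statement but cites \cite[Thm.~12.4]{levin2017markov}, and your proof is essentially the standard spectral argument given there (in fact marginally sharper, since you get a $\sqrt{\pi_{\min}}$ rather than $\pi_{\min}$ in the denominator), so there is nothing to add.
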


Conversely, it is known that $t \geq ( \frac1{\gamma_\star} -  1 ) \log(\frac1{2\eps})$ is necessary to ensure mixing from an arbitrary initial distribution~\cite[Thm.~12.5]{levin2017markov}.
In the unweighted case, we have $\pi_{\min} \geq \frac{d_{\min}}{n \, d_{\max}} \geq \frac1{n^2}$, so the former shows that \cref{thm:mix} is tight up to $\log(n)$ factors in that case.

Finally, for any $s,t\in X$ we let ${\cal H}_{s,t}$ denote the \emph{hitting time} from $s$ to $t$, which is the expected number of steps needed to reach $t$ in a random walk starting from $s$.
We let ${\cal C}_{s,t}={\cal H}_{s,t}+{\cal H}_{t,s}$ denote the \emph{commute time} between $s$ and $t$ -- the expected number of steps needed to reach $t$ and then return to $s$ in a random walk starting from $s$. These quantities are finite if and only if~$s$ and~$t$ are in the same component of $G$. More generally, the commute time ${\cal C}_{s,M}$ from $s$ to a subset $M \subseteq X$ is the expected number of steps needed to reach any vertex in $M$ and then return to $s$ in a random walk starting from $s$.

%-----------------------------------------------------------------------------
\subsection{Quantum walk search algorithms}\label{subsec:qwalksearch}
%-----------------------------------------------------------------------------

Quantum walk search refers to the use of quantum walks to find certain ``marked'' elements on a graph.
We will use quantum walk search to search for a vertex connected to $t$ in the connected component of $S$.
Specifically, we will use the following special case of \cite[Thm.~13]{QWS}.\footnote{To see that this follows from \cite[Thm.~13]{QWS}, note that when $\ket{\sigma}=\ket{s}$, the cost to set up $\ket{\sigma}$ is $\log (n)$ and the value $C_{\sigma,M}$ from \cite{QWS} is exactly the commute time from $s$ to $M$~\cite[Thm.~4]{QWS}.}

\begin{theorem}\label{thm:quantum_walk_search}
Let $P$ be a random walk on a weighted graph with vertex set $X$, $M \subseteq X$ a subset of ``marked'' vertices, and $s\in X$. Let ${\sf C}$ be the (quantum) time complexity to check for a given~$u\in X$ whether~$u\in M$, let ${\sf U}$ be the time complexity of implementing the weighted quantum walk oracle
\begin{equation*}
\ket{u}\ket{0}\mapsto \sum_{v\in N(u)}\sqrt{P_{u,v}}\ket{u}\ket{v}.
\end{equation*}
in space $O(\log(n))$.
Let ${\cal C}$ be a known upper bound on the commute time ${\cal C}_{s,M}$ in the case where $s$ and $M$ are connected (and in particular $M\neq\emptyset$).
Then there is a quantum algorithm that, if $M \neq \emptyset$ and $s$ is connected to $M$, finds an element of $M$ with probability at least $2/3$.
If $M = \emptyset$ or $s$ is not connected to $M$, then the algorithm outputs a vertex not in $M$.
The algorithm has time complexity ${O}(\sqrt{\log(\mathcal C)}\log(n)+\sqrt{{\cal C}\log (\mathcal C) \log(\log(\mathcal C))}({\sf C}+{\sf U}))$ and space complexity $O(\log(n))$.
\end{theorem}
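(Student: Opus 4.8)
The plan is to obtain this statement as a direct specialization of the general quantum walk search result \cite[Thm.~13]{QWS}, instantiated with the initial distribution taken to be the point mass at~$s$. I would begin by recalling the general theorem in the form we need: for a reversible random walk~$P$, a marked set~$M$, and an \emph{arbitrary} initial distribution~$\sigma$ with associated state~$\ket{\sigma}$, it provides a quantum algorithm that, whenever $M\neq\emptyset$ and $\supp(\sigma)$ is connected to~$M$, finds a marked vertex with constant success probability (and otherwise returns an unmarked vertex). Its cost is governed by a setup cost~${\sf S}$ for preparing~$\ket{\sigma}$, the check cost~${\sf C}$, the walk cost~${\sf U}$, and a known upper bound on an abstract quantity~$C_{\sigma,M}$, with a running time of the shape $O\big(\sqrt{\log C_{\sigma,M}}\,({\sf S}+\sqrt{C_{\sigma,M}\log\log C_{\sigma,M}}\,({\sf U}+{\sf C}))\big)$ in space~$O(\log n)$.

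Next I would specialize to $\sigma$ equal to the point distribution supported on~$s$, so that $\ket{\sigma}=\ket{s}$. The key simplifications are twofold. First, the setup cost collapses to ${\sf S}=O(\log n)$, since preparing $\ket{s}$ only requires writing the label~$s$ into an $O(\log n)$-qubit register; this is exactly what yields the additive $\sqrt{\log({\cal C})}\log(n)$ term. Second, I would invoke \cite[Thm.~4]{QWS}, which identifies the abstract parameter with the commute time: for $\sigma$ the point mass at~$s$ one has $C_{\sigma,M}={\cal C}_{s,M}$, \emph{on the nose} rather than up to constants. Since the theorem's input ${\cal C}$ is by hypothesis a known upper bound on~${\cal C}_{s,M}$ in the connected case, and the running time is monotone in the commute-time parameter, I can feed ${\cal C}$ into the general algorithm as the bound on~$C_{\sigma,M}$.

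Substituting ${\sf S}=O(\log n)$ and the bound~${\cal C}$ for~$C_{\sigma,M}$ into the general expression then reads off the claimed time complexity: the setup part contributes $\sqrt{\log({\cal C})}\log(n)$, and the walk part becomes $\sqrt{\log({\cal C})}\cdot\sqrt{{\cal C}\log\log({\cal C})}\,({\sf U}+{\sf C})=\sqrt{{\cal C}\log({\cal C})\log(\log({\cal C}))}\,({\sf C}+{\sf U})$. For the space bound I would observe that the algorithm of \cite{QWS} maintains only a constant number of $O(\log n)$-qubit registers (for the current vertex and a candidate neighbour) together with the amplitude-amplification workspace, and that by assumption both the walk oracle and the membership check run in space~$O(\log n)$, so the total space is~$O(\log n)$. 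I do not anticipate a genuine obstacle here, since the result is a corollary; the only point requiring care is matching the notation of \cite{QWS} to ours and confirming, via \cite[Thm.~4]{QWS}, that $C_{\sigma,M}$ reduces \emph{exactly} to the commute time~${\cal C}_{s,M}$, so that the upper bound~${\cal C}$ may be used verbatim without incurring hidden constant-factor slack in the parameter.
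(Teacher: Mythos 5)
Your proposal matches the paper's own justification exactly: the paper states this theorem as a special case of \cite[Thm.~13]{QWS} and, in a footnote, notes precisely the two points you make — that for $\ket{\sigma}=\ket{s}$ the setup cost is $O(\log n)$, and that by \cite[Thm.~4]{QWS} the quantity $C_{\sigma,M}$ equals the commute time ${\cal C}_{s,M}$. The substitution into the general cost expression and the space accounting are carried out correctly, so there is nothing to add.
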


%-----------------------------------------------------------------------------
\subsection{Quantum RAM}\label{subsec:qram}
%-----------------------------------------------------------------------------

Our algorithm will exploit the given space by saving sets of vertices which will be either connected to $s$ or to $t$.
For our quantum algorithm to access this space, we assume access to a so-called quantum-classical random access memory or \emph{QCRAM}.
This refers to a memory that only stores classical information, but can be queried at a superposition of addresses.
More specifically, an $R$-bit QCRAM stores a string of bits $q \in \{0,1\}^R$ so that the following operations are supported in time $\polylog(R)$:
\begin{enumerate}
\item
\emph{UPDATE(i, x):} store $x \in \{0,1\}$ in the $i$-th bit (i.e., set $q_i = x$).
\item
\emph{QUERY:} for any superposition $\sum_i \alpha_i \ket{i} \ket{s_i}$, it maps
\[
\sum_i \alpha_i \ket{i} \ket{s_i}
\mapsto \sum_i \alpha_i \ket{i} \ket{s_i \oplus q_i}.
\]
\end{enumerate}

As was first described by Kerenidis and Prakash \cite{kerenidis2016quantum}, using such a QCRAM we can set up a data structure to generate quantum superpositions over elements in the QCRAM.
We will use the following formulation based on \cite{apers2019qsampling}.
\begin{lemma}\label{lem:QCRAM}
Fix integer parameters $\ell$ and $k$.
Using an ${O}(k \ell \log(\ell))$-bit QCRAM, there is a data structure, $D$, that stores up to $\ell$ elements $x\in \{0,1\}^k$ with associated integer weights,~$c_x$, of bounded absolute value for some poly$(\ell)$ bound, and supports the following operations in time~$O(k\cdot\mathrm{polylog}(k\ell))$ per operation:
\begin{enumerate}
\item
insertion or deletion of a pair $(x,c_x)$,
\item
quantum queries of the form ``Is $x \in D$?'',
\item
preparation of the quantum state
$\frac{1}{\sqrt{\sum_{x \in D} c_x}} \sum_{x \in D} \sqrt{c_x} \ket{x}$.
\end{enumerate}
\end{lemma}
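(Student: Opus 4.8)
The plan is to implement the classic Kerenidis–Prakash binary-tree data structure on top of the QCRAM, and to verify that each of the three operations fits within the stated time and space budgets. I would maintain a binary tree of depth $k$ whose leaves are indexed by the potential elements $x \in \{0,1\}^k$ and whose internal nodes, indexed by prefixes $p \in \{0,1\}^{\le k}$, store the subtree weight $S_p := \sum_{x \in D,\, p \preceq x} c_x$ (the total weight of stored elements having $p$ as a prefix); the root stores $C = \sum_{x\in D} c_x$. Since $D$ contains at most $\ell$ elements, at most $k\ell+1$ prefixes have positive weight, so it suffices to store only those nodes. I would keep them in the QCRAM as a dictionary keyed by (length, prefix), which uses $O(k\ell\log\ell)$ bits: each node holds a prefix and an integer weight bounded by $\mathrm{poly}(\ell)$, hence $O(k + \log\ell)$ bits, and there are $O(k\ell)$ of them.

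For insertion and deletion of a pair $(x, c_x)$, I would walk the $k+1$ prefixes of $x$ from the root to the leaf and increment (respectively decrement) each stored weight $S_p$ by $c_x$, creating or removing nodes as weights become positive or zero. This is $O(k)$ dictionary updates, each an UPDATE on $O(k+\log\ell)$ bits, for a total of $O(k\cdot\mathrm{polylog}(k\ell))$ time. The membership query ``Is $x \in D$?'' is a single dictionary lookup of the leaf $x$, implemented reversibly by the QCRAM QUERY on a superposition of addresses, mapping $\ket x \ket 0 \mapsto \ket x \ket{[x \in D]}$.

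State preparation proceeds by $k$ rounds of controlled rotations down the tree, outputting $x$ one bit at a time and never introducing an auxiliary index that must later be uncomputed. Starting from the empty prefix with amplitude $1$, at level $j$ the state is $\sum_{|p| = j} \sqrt{S_p / C}\,\ket p$; conditioned on the current prefix $p$ I would query the two child weights $S_{p0}, S_{p1}$, compute to $O(\mathrm{polylog})$ bits of precision the rotation angle $\theta_p$ with $\cos^2\theta_p = S_{p0}/S_p$, apply it to a fresh qubit, and uncompute the angle register and the queried weights. After $k$ rounds this yields $\frac{1}{\sqrt C}\sum_{x\in D}\sqrt{c_x}\,\ket x$ up to an error that I would bound by accumulating the per-rotation precision errors; since all weights are integers of magnitude $\mathrm{poly}(\ell)$, each round costs $O(k\cdot\mathrm{polylog}(k\ell))$ arithmetic and $O(1)$ dictionary queries.

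The main obstacle I anticipate is realizing the \emph{superposition-addressable} sparse dictionary within the QCRAM model and the $O(k\ell\log\ell)$ space bound: the key space $\{0,1\}^{\le k}$ is exponential, but the QCRAM is addressed by a linear index, so I must fix an indexing scheme from prefixes to $O(k\ell)$ physical addresses that (i) is reversibly and cheaply computable, (ii) resolves collisions in $O(\mathrm{polylog})$ probes so that superposition lookups remain reversible and of bounded depth, and (iii) correctly returns weight $0$ for absent prefixes. I would handle this with a fixed hash into $\Theta(k\ell\log\ell)$ buckets together with a bounded-probe collision rule, periodically rebuilt to control the load factor; alternatively, I would replace the trie by a balanced order-statistics tree over the $\ell$ stored keys and recover $x$ from its rank by a reversible search, trading the exponential key space for deterministic $O(\log\ell)$-depth lookups at the cost of an inverse-map step. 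The remaining technical point, controlling the total amplitude error from finite-precision rotations, is routine once the per-level precision is set to $\mathrm{polylog}(k\ell)$ bits.
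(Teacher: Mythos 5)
The paper gives no proof of this lemma---it is imported from Kerenidis--Prakash \cite{kerenidis2016quantum} and \cite{apers2019qsampling}---and your construction is exactly the standard one underlying those references: a weight-augmented binary tree with Grover--Rudolph controlled rotations for state preparation, together with a sparse-dictionary layer to fit the exponential key space into a linearly addressed QCRAM, whose difficulties (reversible superposition lookups, the inverse-map step) you correctly identify and resolve. The proposal is essentially correct; the only quibbles are that storing explicit $(\text{length},\text{prefix})$ keys at all $O(k\ell)$ trie nodes costs $O(k^2\ell)$ bits, which exceeds the stated $O(k\ell\log(\ell))$ budget when $k \gg \log(\ell)$ (avoidable via child pointers or your order-statistics-tree variant over the $\ell$ stored keys), and that item 3 tacitly requires the weights $c_x$ to be non-negative for $\sqrt{c_x}$ to make sense.
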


%-----------------------------------------------------------------------------
\section{USTCON and the Quantum Walk model}\label{sec:ustcon}
%-----------------------------------------------------------------------------
In this section we define the undirected $st$-connectivity problem (\textsc{ustcon}). The input to this problem is an undirected graph $G=(X,E)$. Classically, there are various ways this input may be given, which may change the complexity of the problem. For example, in the \emph{adjacency array model} (defined below), it is possible to randomly sample a neighbour of any vertex $u$ in $O(1)$ queries to $G$ (assuming access to the vertex degrees), facilitating a random walk on $G$, whereas if $G$ is given as an \emph{adjacency matrix}, a random walk step is not so simple.

We will work in a quantum walk analogue of the adjacency array model. We assume that $G$ can be accessed via the \textit{quantum walk oracle} that for every $u\in X$ outputs a uniform superposition over its neighbours:\footnote{Note that this is exactly the quantum walk oracle defined in \thm{quantum_walk_search}, specialized to unweighted graphs.}
\begin{equation}\label{eq:simple oracle}
  \mathcal{O}_W:\ket{u}\ket{0}\mapsto \frac{1}{\sqrt{d_u}}\sum_{v\in N(u)}\ket{u}\ket{v}.
\end{equation}

\noindent Formally, we describe \textsc{ustcon}$_{\text{qw}}$ in terms of the input and output.

\begin{problem}[$\textsc{ustcon}_{\text{qw}}$]\label{prob:ustcon}
Given access to an undirected graph $G=(X,E)$ via the quantum walk oracle $\mathcal{O}_W$, and two vertices $s,t\in X$, decide whether $s$ and $t$ are in the same connected component of $G$.
\end{problem}

To compare our work with classical results on \textsc{ustcon}$_{\text{arr}}$, we describe an implementation of the quantum walk oracle defined above based on adjacency array access to a graph.
Let $u\in X$ and $i\in [d_u]$. We assume that for each vertex $u$ there is a fixed numbering of its neighbours from~$1$ to~$d_u$. In the \textit{adjacency array model}, two types of queries are allowed:

\begin{itemize}
    \item Degree query $\mathcal{O}_D:\ket{0}\ket{u}\mapsto \ket{d_u}\ket{u}$
    \item Neighbour query $\mathcal{O}_N:\ket{u}\ket{i}\ket{0}\mapsto \ket{u}\ket{i}\ket{v_{i}(u)}$
\end{itemize}

\noindent In the \emph{sorted adjacency array model} we additionally assume that for every vertex $u\in X$ its neighbours are sorted: for any $i,j\in [d_u]$, if $i<j$ then $v_i(u)<v_j(u)$.
In particular, this allows us to check with $O(\log( n))$ queries whether a given pair of vertices $u,v$ are adjacent.
We define the $\textsc{ustcon}$-problem in this model as follows.

\begin{problem}[\textsc{ustcon}\textsubscript{s-arr}]\label{prob:ustcon-arr}
Given access to an undirected graph $G=(X,E)$ via the sorted adjacency array model and two vertices $s,t\in X$, decide whether $s$ and $t$ are in the same connected component of $G$.
\end{problem}

The question that we consider is how many sorted adjacency array queries to the graph it takes to implement the quantum walk oracle $\mathcal{O}_W$.

\begin{lemma} \label{lem:QW_assumption_implementation}
The quantum walk oracle $\mathcal{O}_W$ for an unweighted graph $G$ (\cref{eq:simple oracle}) with maximum degree $d_{\max}$ can be implemented with $O(\log (d_{\max}))$ queries in the sorted adjacency array model, and
$\widetilde O(1)$ other elementary operations and space.
\end{lemma}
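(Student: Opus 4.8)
The plan is to implement $\mathcal{O}_W$ by the familiar prepare--query--unprepare pattern, using the \emph{sorted} assumption only in the final uncomputation step. Starting from $\ket{u}\ket{0}$, I would first apply one degree query $\mathcal{O}_D$ to write $d_u$ into an ancilla, reaching (schematically) $\ket{u}\ket{d_u}\ket{0}\ket{0}$, where the last two registers are destined to hold a neighbour index $i\in[d_u]$ and a neighbour label. Controlled on the value $d_u$, I would then prepare the \emph{exact} uniform superposition $\frac{1}{\sqrt{d_u}}\sum_{i=1}^{d_u}\ket{i}$ in the index register. Since $d_u$ need not be a power of two, Hadamards alone do not suffice; instead I would use the standard sequence of $O(\log d_u)$ controlled rotations that process the bits of the running count from most to least significant, yielding the uniform superposition exactly. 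This step uses no queries and $\widetilde{O}(1)$ gates and ancillae.

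Next I apply $\mathcal{O}_N$ once in superposition to obtain $\ket{u}\ket{d_u}\frac{1}{\sqrt{d_u}}\sum_{i=1}^{d_u}\ket{i}\ket{v_i(u)}$. The remaining task---and the only place the sorted assumption enters---is to erase the index register $\ket{i}$, so that the sum over $i$ collapses into a sum over distinct neighbours $v\in N(u)$. Because the neighbours are sorted, the map $i\mapsto v_i(u)$ is strictly increasing and hence invertible: given a label $v$, binary search over $[d_u]$ recovers the unique $i$ with $v_i(u)=v$ using $O(\log d_u)=O(\log d_{\max})$ neighbour queries, each probe comparing $v$ against $v_j(u)$ for the current midpoint $j$. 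Running this binary search coherently---padded to a fixed depth $\lceil\log d_u\rceil$ so that all branches of the superposition have equal length, and writing its output into a fresh register---lets me XOR the recovered index onto the $\ket{i}$ register to reset it to $\ket{0}$, after which I reverse the search to restore its ancillae. Finally I uncompute $\ket{d_u}$ with one inverse degree query, leaving $\ket{u}\,\frac{1}{\sqrt{d_u}}\sum_{v\in N(u)}\ket{v}$, as required.

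Tallying: one degree query, one neighbour query, the $O(\log d_{\max})$ neighbour queries of the binary search, and one final degree query give $O(\log d_{\max})$ queries total, together with $\widetilde{O}(1)$ elementary gates and $O(\log n)$ ancilla qubits. The main obstacle is precisely this index uncomputation: $\mathcal{O}_N$ leaves the garbage register $\ket{i}$ entangled with the neighbour label, and without a cheap way to invert $i\mapsto v_i(u)$ this garbage would spoil the desired superposition---a linear scan would cost $\Theta(d_u)$ queries, whereas binary search costs only $O(\log d_{\max})$, which is what drives the claimed bound. A secondary point to handle with care is the \emph{exact} (rather than merely approximate) preparation of the uniform superposition over $[d_u]$, so that the output is literally $\mathcal{O}_W\ket{u}\ket{0}$ and not just close to it.
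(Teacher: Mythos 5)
Your proposal is correct and follows essentially the same route as the paper: prepare a uniform superposition over $[d_u]$ controlled on the degree, apply one neighbour query in superposition, and then uncompute the index register via coherent binary search enabled by the sorted-neighbours assumption, which is exactly where the $O(\log d_{\max})$ query count comes from. The only cosmetic difference is that the paper packages the uncomputation as an ``index query'' $\mathcal{O}_I$ and prepares the uniform superposition with a degree-controlled Fourier transform rather than a cascade of controlled rotations.
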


\begin{proof}
Assume first, that there is an additional type of query allowed, namely:
\begin{equation}
    \text{Index query: } \mathcal{O}_I:\ket{u}\ket{v}\ket{0}\mapsto \ket{u}\ket{v}\ket{i}
\end{equation}
for $u,v\in X$ and $i\in [d_u]$ such that $v_i(u)=v$.
Then $\mathcal{O}_W$ can be implemented using ${\cal O}_I$, ${\cal O}_D$, and ${\cal O}_N$  as follows. Let $F_d$ denote the Fourier transform over $\mathbb{Z}_d$, and let $F=\sum_{d=1}^n\ket{d}\bra{d}\otimes F_d$, which can be implemented (to any inverse polynomial precision) in $O(\log (n))$ gates. Then for any $u\in X$, we implement:
\begin{align*}
\ket{0}\ket{u}\ket{0}\ket{0}&\overset{\mathcal{O}_D}{\mapsto}\ket{d_u}\ket{u}\ket{0}\ket{0}\overset{F}{\mapsto}\frac{1}{\sqrt{d_u}}\ket{d_u}\sum_{i=1}^{d_u}\ket{u}\ket{i}\ket{0}\\
&\overset{\mathcal{O}_N}{\mapsto}\frac{1}{\sqrt{d_u}}\ket{d_u}\sum_{i=1}^{d_u}\ket{u}\ket{i}\ket{v_i(u)}\overset{\mathcal{O}_I^{\dagger}\mathcal{O}_D^{\dagger}}{\mapsto}\frac{1}{\sqrt{d_u}}\sum_{i=1}^{d_u}\ket{u}\ket{v_i(u)} = {\cal O}_W\ket{u}\ket{0}.
\end{align*}
To complete the proof, note that the index query operator $\mathcal{O}_I$ only requires $O(\log (d_u))$ sorted adjacency array queries, since the neighbours are sorted and this makes it possible to perform binary search for $i$ such that $v_i(u)=v$.
\end{proof}

It follows that any quantum algorithm solving $\textsc{ustcon}_{\mathrm{qw}}$ in $T$ time and $S=\Omega(\log(n))$ space can solve $\textsc{ustcon}_{\text{s-arr}}$ in $\widetilde{O}(T)$ time and $O(S)$ space.

%=============================================================================
\section{Time- and space-optimal quantum algorithm}\label{sec:single-walk}
%=============================================================================

In \sec{MH_description} and \sec{no-tradeoff}, we give an algorithm for \textsc{ustcon}$_{\text{qw}}$ that is optimal in both time and space. For completeness, we first give a time lower bound in \sec{lower-bound}.

%-----------------------------------------------------------------------------
\subsection{Lower bound}\label{sec:lower-bound}
%-----------------------------------------------------------------------------
The proof of the following lower bound follows the lines of the proof of an analogous lower bound for the strong connectivity problem described in \cite{DHHM}. The proof is via a reduction from \textsc{parity}.

\begin{problem}[\textsc{parity}]
Given oracle access to a string $x\in \{0,1\}^n$ via $\mathcal{O}_x:\ket{i}\ket{b}\mapsto\ket{i}\ket{b\oplus x_i}$, return $\bigoplus_{i=0}^{n-1}x_i$.
\end{problem}

\begin{lemma}[\cite{beals2001QLowerBoundPoly,farhi1998parity}]\label{lem:parity_complexity}
The bounded error quantum query complexity of \textsc{parity} is~$\Omega(n)$.
\end{lemma}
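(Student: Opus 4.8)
The plan is to prove this via the \emph{polynomial method} of Beals, Buhrman, Cleve, Mosca, and de Wolf, combined with the classical symmetrization argument of Minsky and Papert. The starting point is the observation that the amplitudes produced by a quantum query algorithm are low-degree polynomials in the input bits. Concretely, consider any algorithm making $q$ queries to the oracle $\mathcal{O}_x$, interleaved with arbitrary input-independent unitaries. First I would prove by induction on the number of queries that, after $q$ queries, the amplitude of each computational basis state in the algorithm's state is a multilinear polynomial in the variables $x_1,\dots,x_n$ of degree at most $q$. The base case is trivial since the initial state is input-independent (degree $0$); for the inductive step, note that applying $\mathcal{O}_x$ replaces the amplitude of $\ket{i}\ket{c}\ket{z}$ by that of $\ket{i}\ket{c\oplus x_i}\ket{z}$, which equals $\alpha_{i,c,z}(1-x_i)+\alpha_{i,c\oplus 1,z}\,x_i$, raising the degree by at most one, while the intervening input-independent unitaries only take linear combinations and do not increase the degree.

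From this, the acceptance probability $p(x)$ — the probability the algorithm outputs $1$, obtained as the sum of squared magnitudes of the amplitudes on the accepting basis states — is a real multilinear polynomial in $x \in \{0,1\}^n$ of degree at most $2q$. Bounded-error computation of parity means $\lvert p(x) - \bigoplus_{i} x_i \rvert \leq 1/3$ for every $x \in \{0,1\}^n$. The remaining task is then purely a statement about real polynomials: any multilinear polynomial that $1/3$-approximates parity must have degree at least $n$.

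To establish this, I would symmetrize. Averaging $p$ over all permutations of the $n$ input coordinates yields a polynomial whose value depends only on the Hamming weight $k = \sum_i x_i$, and which can be written as a univariate real polynomial $P(k)$ of degree at most $\deg p \le 2q$ (this is the Minsky--Papert symmetrization lemma). Since symmetrization preserves the approximation guarantee, $P(k)$ lies within $1/3$ of the parity of $k$ at every integer $k \in \{0,1,\dots,n\}$; in particular $P(k) \le 1/3$ for even $k$ and $P(k) \ge 2/3$ for odd $k$. Hence $P(k) - \tfrac12$ changes sign between each pair of consecutive integers in $\{0,\dots,n\}$, giving at least $n$ distinct roots in the interval $[0,n]$, so $\deg P \ge n$. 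Combining the inequalities gives $2q \ge \deg P \ge n$, that is $q \ge n/2 = \Omega(n)$.

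The main obstacle is the degree-tracking lemma in the first step: one must carefully verify that a single query raises the polynomial degree of the amplitudes by at most one and that the interleaved input-independent unitaries do not increase it at all, and then that passing from amplitudes to squared magnitudes at most doubles the degree. Everything after that is elementary real analysis. An alternative route avoiding polynomials entirely is a hybrid or adversary argument, but the polynomial method is the cleanest and matches the cited references; I would not expect the symmetrization or sign-counting steps to present any real difficulty.
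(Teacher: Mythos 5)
Your proposal is correct and is precisely the polynomial-method argument of Beals, Buhrman, Cleve, Mosca, and de Wolf that the paper cites for this lemma (the paper itself offers no proof beyond the citation). The degree-tracking induction, the passage to acceptance probability of degree at most $2q$, and the Minsky--Papert symmetrization with sign-counting giving $2q \ge n$ are all exactly as in the cited reference, so there is nothing to add.
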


\noindent We use \lem{parity_complexity} and a reduction from parity to show the following.

\begin{theorem}
The bounded error quantum query complexity of \textsc{ustcon}$_\mathrm{s\mbox{-}arr}$ and \textsc{ustcon}$_{\mathrm{qw}}$ is~$\Omega(n)$.

\end{theorem}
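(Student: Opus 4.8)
The plan is to reduce \textsc{parity} on an $n$-bit string to \textsc{ustcon} (in either access model), so that any algorithm deciding connectivity with bounded error yields an algorithm for \textsc{parity} making only a constant-factor more queries; combined with \lem{parity_complexity} this gives the $\Omega(n)$ bound. Concretely, given oracle access to $x \in \{0,1\}^n$ via $\mathcal{O}_x$, I would build a graph $G_x$ on roughly $O(n)$ vertices whose connectivity between two designated vertices $s$ and $t$ encodes $\bigoplus_i x_i$. The natural construction is a ``gadget chain'': arrange vertices $v_0, v_1, \dots, v_n$ in a line, and for each $i$ use the bit $x_i$ to decide how $v_{i-1}$ connects to $v_i$, so that the parity of the bits read so far is recorded by which of two parallel tracks the path currently occupies. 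Formally I would keep two tracks, an ``even'' track and an ``odd'' track, and let each bit $x_i=1$ swap tracks while $x_i=0$ keeps the current track; setting $s$ on the even track at position $0$ and $t$ on the even track at position $n$, the vertices $s$ and $t$ end up connected if and only if $\bigoplus_i x_i = 0$.

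The key steps, in order, are as follows. First I would give the explicit vertex set and edge set of $G_x$ as a function of $x$, making sure the graph has $O(n)$ vertices and bounded (in fact constant) degree, so that degree queries are trivial and neighbour queries are cheap. Second, I would verify the correctness of the reduction: a clean induction on $i$ showing that after processing the first $i$ bits, the component containing $s$ reaches exactly the track whose parity equals $\bigoplus_{j\le i} x_j$, so that $s$ and $t$ lie in the same component precisely when the total parity is even. Third, and most importantly, I would show that a single neighbour or degree query to $G_x$ can be simulated using $O(1)$ queries to $\mathcal{O}_x$: since the local structure around vertex $v_i$ depends only on the bit $x_i$ (and perhaps $x_{i+1}$), computing a neighbour of a given vertex requires reading at most a constant number of bits of $x$. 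The same bookkeeping shows that one call to the quantum walk oracle $\mathcal{O}_W$ for $G_x$ can be implemented with $O(1)$ calls to $\mathcal{O}_x$ plus $\widetilde{O}(1)$ elementary gates, which is what handles the \textsc{ustcon}$_{\mathrm{qw}}$ case directly; the \textsc{ustcon}$_{\text{s-arr}}$ case then follows either by the same simulation or by appealing to \lem{QW_assumption_implementation}.

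Putting these together: if there were a bounded-error quantum algorithm for \textsc{ustcon} using $o(n)$ queries to the graph, then by replacing each graph query with its $O(1)$-query simulation we would obtain a bounded-error quantum algorithm for \textsc{parity} using $o(n)$ queries to $\mathcal{O}_x$, contradicting \lem{parity_complexity}. Hence the query complexity of \textsc{ustcon} in both models is $\Omega(n)$.

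The main obstacle I anticipate is engineering the gadget so that each local neighbourhood of $G_x$ depends on only a constant number of input bits while still faithfully propagating parity; a naive encoding might require reading many bits to answer a single neighbour query, which would weaken the reduction. The fix is to make the dependence strictly local — each ``rung'' of the two-track ladder is controlled by one bit $x_i$ — and to keep the degree bounded so that the neighbour query $\mathcal{O}_N$ and the index query $\mathcal{O}_I$ (and hence $\mathcal{O}_W$) can be answered from a constant-size window of $x$. A secondary point to handle carefully is the constant-versus-polylog overhead in the simulation, but since we only need the reduction to preserve query complexity up to a constant (or polylog) factor, this does not affect the $\Omega(n)$ conclusion.
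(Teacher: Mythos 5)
Your proposal is correct and follows essentially the same route as the paper: a reduction from \textsc{parity} via a two-track ladder graph in which each bit $x_i$ either swaps or preserves the track, with every degree and neighbour query (and hence the quantum walk oracle, via the bounded-degree case of \lem{QW_assumption_implementation}) simulable by $O(1)$ queries to $\mathcal{O}_x$. The only difference is an immaterial convention (the paper places $t$ on the opposite track, so connectivity encodes parity $1$ rather than parity $0$).
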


\begin{figure}
    \centering
    \begin{tikzpicture}
    \node at (0,1.25) {$s$};
    \filldraw (0,1) circle (.05);
    \filldraw (0,0) circle (.05);
    \node at (0,-.25) {$v_0'$};

        \draw[red,->] (0,0)--(1,1);
        \draw[red,->] (0,1)--(1,0);
        \node at (.07,.7) {\small\color{red}${x}_0$};
        \node at (.07,.3) {\small\color{red}$x_0$};

        \draw[blue,->] (0,0)--(1,0);
        \draw[blue,->] (0,1)--(1,1);
        \node at (.5,1.15) {\small\color{blue}$\bar{x}_0$};
        \node at (.5,-.17) {\small\color{blue}$\bar{x}_0$};

    \node at (1,1.25) {$v_1$};
    \filldraw (1,1) circle (.05);
    \filldraw (1,0) circle (.05);
    \node at (1,-.25) {$v_1'$};

        \draw[red,->] (1,0)--(2,1);
        \draw[red,->] (1,1)--(2,0);
        \node at (1.07,.7) {\small\color{red}${x}_1$};
        \node at (1.07,.3) {\small\color{red}$x_1$};

        \draw[blue,->] (1,0)--(2,0);
        \draw[blue,->] (1,1)--(2,1);
        \node at (1.5,1.15) {\small\color{blue}$\bar{x}_1$};
        \node at (1.5,-.17) {\small\color{blue}$\bar{x}_1$};

    \node at (2,1.25) {$v_2$};
    \filldraw (2,1) circle (.05);
    \filldraw (2,0) circle (.05);
    \node at (2,-.25) {$v_2'$};

        \draw[red,->] (2,0)--(3,1);
        \draw[red,->] (2,1)--(3,0);
        \node at (2.07,.7) {\small\color{red}${x}_2$};
        \node at (2.07,.3) {\small\color{red}$x_2$};

        \draw[blue,->] (2,0)--(3,0);
        \draw[blue,->] (2,1)--(3,1);
        \node at (2.5,1.15) {\small\color{blue}$\bar{x}_2$};
        \node at (2.5,-.17) {\small\color{blue}$\bar{x}_2$};

        \node at (3.5,.5) {$\dots$};

        \draw[red,->] (4,0)--(5,1);
        \draw[red,->] (4,1)--(5,0);

        \draw[blue,->] (4,0)--(5,0);
        \draw[blue,->] (4,1)--(5,1);
        \node at (4.5,1.15) {\small\color{blue}$\bar{x}_{n-2}$};
        \node at (4.5,-.17) {\small\color{blue}$\bar{x}_{n-2}$};

    \filldraw (5,1) circle (.05);
    \filldraw (5,0) circle (.05);

        \draw[red,->] (5,0)--(6,1);
        \draw[red,->] (5,1)--(6,0);
        \node at (6.2,.7) {\small\color{red}${x}_{n-1}$};
        \node at (6.2,.3) {\small\color{red}$x_{n-1}$};

        \draw[blue,->] (5,0)--(6,0);
        \draw[blue,->] (5,1)--(6,1);
        \node at (5.5,1.15) {\small\color{blue}$\bar{x}_{n-1}$};
        \node at (5.5,-.17) {\small\color{blue}$\bar{x}_{n-1}$};

    \node at (6.2,1.25) {$v_{n}$};
    \filldraw (6,1) circle (.05);
    \filldraw (6,0) circle (.05);
    \node at (6.2,-.25) {$t$};

    \end{tikzpicture}
    \caption{The parity graph. We include an edge labelled by ``$x_i$'' (in red) if and only if $x_i=1$, and an edge labelled ``$\bar{x}_i$'' (in blue) if and only if $x_i=0$, meaning that for each vertex we include exactly one of the two incoming edges, and exactly one of the two outgoing edges. The resulting graph has $s$ and $t$ connected if and only if $\textsc{parity}(x)=1$.}
    \label{fig:parity}
\end{figure}
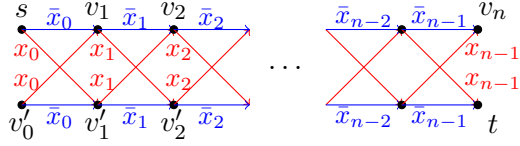
\begin{proof}
We will reduce the \textsc{parity} problem to \textsc{ustcon}\textsubscript{s-arr}. Since \textsc{parity} requires $\Omega(n)$ queries by \lem{parity_complexity}, and the quantum walk oracle $\mathcal{O}_W$ can be implemented using $\widetilde{O}(1)$ sorted array queries by \lem{QW_assumption_implementation}, this reduction will prove the statement of the theorem.

Let $x\in \{0,1\}^n$ be an input of \textsc{parity}. The corresponding output would be $\bigoplus_{i=0}^{n-1}x_i$. Given this, we need to build a \textsc{ustcon}$_{\text{s-arr}}$ input that can be queried with a constant number of queries to $x$. Consider an undirected graph $G=(X,E)$ defined as follows (see also \fig{parity}):
\begin{align*}
X&=\{s=v_0,v_0',v_1,v_1',\ldots,v_{n-1},v_{n-1}',v_n,t=v_n'\}\\
E&=\{\{v_i,v_{i+1}'\},\{v_i',v_{i+1}\}: i\in \{0,\dots,n-1\}, x_{i}=1\}\\
&\qquad \cup \{
\{v_i,v_{i+1}\},\{v_i',v_{i+1}'\} : i\in \{0,\dots,n-1\}, x_{i}=0\}.
\end{align*}
In this setting, $\bigoplus_{i=0}^{n-1}x_i=1$ if and only if $s$ and $t$ are connected in the graph $G$.

Next, we describe how to implement queries ${\cal O}_D$ and ${\cal O}_N$ to $G$ as required by the \textsc{ustcon}$_{\text{s-arr}}$ problem, using queries to ${\cal O}_x$.
Consider the following encoding of vertices of $G$. For $i\in \{0,\dots,n\}$, we let $v_i=(i,0)$, and $v_i'=(i,1)$. That is, for a vertex $(i,b)$, $i\in\{0,\ldots,n\}$ encodes the ``column'' and $b\in\{0,1\}$ encodes the ``row''. Assume that the vertices are ordered lexicographically, i.e.
$$(i,b_i)<(j,b_j)\iff i<j \text{ or } i=j, b_i<b_j.$$

\noindent Queries to $G$ are described according to this ordering.
\begin{enumerate}
    \item Degree queries, ${\cal O}_D$, are trivial in this case as $d_{v_0}=d_{v_0'}=d_{v_n}=d_{v_n'}=1$, and all other degrees are 2.
    \item Since every vertex has degree at most $2$, we explicitly describe neighbour queries, ${\cal O}_N$ for indices $1$ and $2$ such that the ordering assumption holds.
    \begin{itemize}
        \item $\mathcal{O}_N: \ket{i}\ket{b}\ket{1}\ket{0}\ket{0}\mapsto\ket{i}\ket{b}\ket{1}{\color{blue}\ket{i-1}\ket{b}}\overset{\mathcal{O}_x}{\mapsto}\ket{i}\ket{b}\ket{1}{\color{blue}\ket{i-1}\ket{b\oplus x_{i-1}}},\; \forall \; 0<i\le n$\\

        \item $\mathcal{O}_N: \ket{i}\ket{b}\ket{2}\ket{0}\ket{0}\mapsto{\color{blue}\ket{i}}\ket{b}\ket{2}\ket{i+1}{\color{blue}\ket{b}}\overset{\mathcal{O}_x}{\mapsto}{\color{blue}\ket{i}}\ket{b}\ket{2}\ket{i+1}{\color{blue}\ket{b\oplus x_i}},\; \forall \; 0\le i< n$
    \end{itemize}
\end{enumerate}

\noindent It can be seen from the formulas that queries to $G$ can be implemented with a constant number of queries to the parity input $x$.
This implies the $\Omega(n)$ lower bound in $\textsc{ustcon}_{\text{s-arr}}$.

For $\textsc{ustcon}_{\text{qw}}$, note that the graph has bounded degree, and so by \cref{lem:QW_assumption_implementation} we can simulate a query in this model using $O(1)$ queries in the sorted adjacency array model.
This implies a similar $\Omega(n)$ lower bound for this model.
\end{proof}

%-----------------------------------------------------------------------------
\subsection{Metropolis-Hastings walk} \label{sec:MH_description}
%-----------------------------------------------------------------------------

In this section, we consider an unweighted simple graph $G$.
The algorithm that we propose involves a quantum walk on a modified weighted version of $G$ that we call $G'=(X',E',W)$. We start by describing the construction of $G'$ that was introduced in \cite[arXiv~v2]{MH}.

\begin{definition}[Metropolis-Hastings walk]\label{def:MH}
For any graph $G=(X,E)$, the corresponding \emph{Metropolis-Hastings walk} is the random walk on the weighted graph $G'=(X',E',W)$ defined as follows. For every $u\in X$, we include a corresponding vertex $x_u$ in $X'$. In addition, for every edge $\{u,v\}\in E$, we add a new vertex $x_{u,v}$ that splits the edge into two new edges. Formally:
\begin{align*}
    X' &= \{x_u:u\in X\}\cup\{x_{u,v}:\{u,v\}\in E,u<v\}\\
    E' &= \{\{x_u,x_{u,v}\}: \{u,v\}\in E\}.
\end{align*}
For every edge $\{x_u,x_{u,v}\}\in E'$, we define the edge weight $W_{x_u,x_{u,v}}=\frac{1}{d_u}$. These weights define transition probabilities for the random walk on $G'$.
\end{definition}

The above has been called the \emph{cautious walk} in~\cite[arXiv~v2]{MH}, while \emph{Metropolis-Hastings-type} walks are walks in which neighbours are sampled and accepted with some probability.
Our terminology is motivated by the following observation.
If we start with a vertex~$u\in X$ and take two steps of the walk of \cref{def:MH}, then we arrive at another vertex~$v\in X$, which is either the same or a neighbour of~$u$ in~$G$.
The walk on~$G$ defined this way has the following alternative description:
sample a uniformly random neighbour and accept it with probability~$\frac{1/d_v}{1/d_u+1/d_v}=\frac{1}{1+d_v/d_u}$.
This is precisely a random walk that falls into the Metropolis-Hastings framework, justifying our terminology.
The precise choice of acceptance probabilities is sometimes called the \emph{Glauber choice} in the literature (e.g., \cite{lemieux2020efficient}).
We note that a later version of \cite{MH} uses another choice of Metropolis-Hastings walk, but of our purposes we find it convenient to stick to the walk as defined above.

While the hitting time of a random walk between two vertices in $G$ may be as high as $O(n^3)$, in $G'$ it is at most $O(n^2)$~\cite[Lemma 2 of {arXiv} v2]{MH}:
\begin{lemma}[\cite{MH}]\label{lem:MH}
Let $G=(X,E)$ be any unweighted graph, and $G'$ the corresponding (weighted) Metropolis-Hastings graph as in \defin{MH}.
For any $u,v\in X$ connected by a path, ${\cal H}_{u,v}(G')\leq 18n^2$.
\end{lemma}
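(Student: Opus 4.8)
The plan is to bound the hitting time by the commute time and then estimate the latter via the electrical network identity ${\cal C}_{u,v}(G') = {\cal W}(G')\cdot R_{\mathrm{eff}}(x_u,x_v)$, where $R_{\mathrm{eff}}$ denotes effective resistance in $G'$ when each edge $e$ is assigned resistance $1/W_e$, and ${\cal W}(G')=\sum_{x'\in X'}w_{x'}$ is the total weight. Since ${\cal H}_{u,v}\le {\cal C}_{u,v}={\cal H}_{u,v}+{\cal H}_{v,u}$, it suffices to control ${\cal W}(G')$ and $R_{\mathrm{eff}}(x_u,x_v)$. The first is a quick computation: each original vertex has weighted degree $w_{x_u}=\sum_{v\in N(u)}\tfrac1{d_u}=1$, so these contribute $n$; each edge-vertex has weight $w_{x_{u,v}}=\tfrac1{d_u}+\tfrac1{d_v}$, and $\sum_{\{u,v\}\in E}(\tfrac1{d_u}+\tfrac1{d_v})=\sum_u\sum_{v\in N(u)}\tfrac1{d_u}=n$, so ${\cal W}(G')=2n$.

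Next I would simplify the resistance computation. Every edge-vertex $x_{u,v}$ has degree $2$ in $G'$, lying on the two-edge path $x_u - x_{u,v} - x_v$ with series resistances $1/W_{x_u,x_{u,v}}=d_u$ and $1/W_{x_v,x_{u,v}}=d_v$. Series reduction of all such degree-two vertices turns $G'$ into the weighted graph $\tilde G$ on $\{x_u:u\in X\}$ in which the edge $\{u,v\}\in E$ carries resistance $d_u+d_v$, and series reduction preserves effective resistances between the surviving terminals, so $R_{\mathrm{eff}}^{G'}(x_u,x_v)=R_{\mathrm{eff}}^{\tilde G}(u,v)$. I then upper bound $R_{\mathrm{eff}}^{\tilde G}(u,v)$ by the resistance of the unit current pushed along a single shortest path $u=w_0,w_1,\dots,w_k=v$ in $G$ (the series bound, equivalently Thomson's principle after deleting all other edges), giving $R_{\mathrm{eff}}^{\tilde G}(u,v)\le\sum_{i=0}^{k-1}(d_{w_i}+d_{w_{i+1}})\le 2\sum_{i=0}^{k}d_{w_i}$.

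The crux, and the step I expect to be the main obstacle, is bounding the degree sum $\sum_i d_{w_i}$ along the path by $O(n)$: for an \emph{arbitrary} path in a dense graph this sum is $\Theta(n^2)$, which would only yield a useless $O(n^3)$ hitting-time bound. The key observation that saves the argument is that the path is a \emph{geodesic}, so $d_G(w_i,w_j)=\abs{i-j}$ for all $i,j$. Writing $\sum_{i=0}^{k}d_{w_i}=\sum_{z\in X}\abs{\{i:z\in N(w_i)\}}$ and noting that any two indices $i,j$ with $z\in N(w_i)\cap N(w_j)$ satisfy $\abs{i-j}=d_G(w_i,w_j)\le 2$, each vertex $z$ is adjacent to at most three consecutive path vertices. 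Hence $\sum_{i=0}^{k}d_{w_i}\le 3n$, giving $R_{\mathrm{eff}}^{\tilde G}(u,v)\le 6n$.

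Combining the pieces yields ${\cal H}_{u,v}(G')\le {\cal C}_{u,v}(G')={\cal W}(G')\cdot R_{\mathrm{eff}}^{G'}(x_u,x_v)\le 2n\cdot 6n=12n^2\le 18n^2$, as required (the constant is not tight, and any looseness in the geodesic counting is comfortably absorbed). To make the write-up self-contained I would also recall the commute-time identity and the series-reduction fact, since the whole argument hinges on reducing a seemingly global quantity (hitting time in the split graph $G'$) to a purely local geodesic-degree count in $G$.
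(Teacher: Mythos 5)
Your proof is correct, and since the paper only cites \cite{MH} for this lemma rather than proving it, there is no in-paper argument to diverge from: your route (commute-time identity, series reduction of the degree-two edge-vertices, a single-geodesic resistance bound, and the observation that a shortest path has degree sum at most $3n$ because no vertex can neighbour two path vertices at distance more than $2$) is the standard one and, as far as I can tell, essentially Kosowski's. All the individual computations check out: ${\cal W}(G')=2n$, $R_{\mathrm{eff}}(x_u,x_v)\le 6n$, hence ${\cal H}_{u,v}(G')\le{\cal C}_{u,v}(G')\le 12n^2\le 18n^2$.
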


In order to apply \thm{quantum_walk_search} to $G'$, we need to upper bound ${\sf U}$, the cost of implementing the weighted quantum walk oracle.
For $u\in X'$ the oracle is defined as
$$U:\ket{x}\ket{0}\mapsto\sum_{y\in N(x)}\sqrt{P'_{xy}}\ket{x}\ket{y},\quad \forall x\in X'$$
where $P'_{xy}$ is the probability of walking from $x$ to $y$ defined by the edge weights.

\begin{lemma}\label{lem:Szegedy_implementation}
The weighted quantum walk oracle $U$ for the Metropolis-Hastings walk $G'$ can be implemented with $\widetilde{{O}}(1)$ degree queries $\mathcal{O}_D$, $\widetilde{{O}}(1)$ applications of the quantum walk operator $\mathcal{O}_W$ on the graph $G$, and $\widetilde{O}(1)$ additional gates.

Therefore, $U$ can be implemented with $\widetilde{{O}}(1)$ queries to $G$ in the sorted adjacency array model, and $\widetilde{O}(1)$ additional gates.
\end{lemma}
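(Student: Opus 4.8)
The plan is to implement $U$ by a case analysis on the two types of vertices in $X'$: the ``original'' vertices $x_u$ (for $u\in X$) and the ``edge'' vertices $x_{u,v}$ (for $\{u,v\}\in E$, $u<v$), which I distinguish by a flag bit built into the encoding of $X'$ (say $x_u$ as flag $0$ together with the label $u$, and $x_{u,v}$ as flag $1$ together with the sorted pair $(u,v)$). The full oracle is then obtained by applying the appropriate subroutine controlled on this flag, so it acts correctly on arbitrary superpositions over $X'$. First I would read off the transition probabilities from \defin{MH}. An original vertex has total weight $w_{x_u}=\sum_{v\in N(u)}1/d_u=1$, so $P'_{x_u,x_{u,v}}=1/d_u$ is uniform over its $d_u$ edge-neighbours. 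An edge vertex has weight $w_{x_{u,v}}=1/d_u+1/d_v$, so $P'_{x_{u,v},x_u}=\frac{d_v}{d_u+d_v}$ and $P'_{x_{u,v},x_v}=\frac{d_u}{d_u+d_v}$, which are exactly the Metropolis--Hastings acceptance ratios.

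For the original-vertex branch, the target amplitude $\frac{1}{\sqrt{d_u}}$ on each $\ket{x_{u,v}}$ is precisely what $\mathcal{O}_W$ produces, up to relabelling the output register from $v$ to $x_{u,v}$. So I would apply $\mathcal{O}_W$ to obtain $\frac{1}{\sqrt{d_u}}\sum_{v\in N(u)}\ket{u}\ket{v}$ and then reversibly replace the second register $\ket{v}$ by $\ket{x_{u,v}}$: compute the sorted pair $(u,v)$ into a fresh register, and uncompute $v$ using the fact that, given $u$ and the pair, $v$ is determined as the other element. This costs one call to $\mathcal{O}_W$ and $\widetilde{O}(1)$ gates.

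For the edge-vertex branch, I would first query $\mathcal{O}_D$ on $u$ and on $v$ (both readable from the pair stored in $x_{u,v}$) to obtain $\ket{d_u}\ket{d_v}$ in ancillas. From these integers I would classically compute, in superposition, a rotation angle $\theta$ with $\cos\theta=\sqrt{d_v/(d_u+d_v)}$ and $\sin\theta=\sqrt{d_u/(d_u+d_v)}$, to inverse-polynomial precision, and apply the corresponding single-qubit rotation to a selector qubit to prepare $\sqrt{\frac{d_v}{d_u+d_v}}\ket{0}+\sqrt{\frac{d_u}{d_u+d_v}}\ket{1}$. Controlled on the selector and the pair I write $x_u$ or $x_v$ into the target register, then uncompute the selector (recoverable from the target and the pair) and uncompute $\ket{d_u}\ket{d_v}$ via $\mathcal{O}_D^\dagger$. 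This costs $\widetilde{O}(1)$ degree queries and $\widetilde{O}(1)$ gates.

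Combining the two branches under the flag bit yields $U$ with $\widetilde{O}(1)$ calls to $\mathcal{O}_D$, $\widetilde{O}(1)$ calls to $\mathcal{O}_W$, and $\widetilde{O}(1)$ additional gates; the ``Therefore'' part then follows by substituting the $\widetilde{O}(1)$-query implementations of $\mathcal{O}_W$ and $\mathcal{O}_D$ from \lem{QW_assumption_implementation}. The main obstacle I expect is bookkeeping rather than conceptual difficulty: making each step genuinely reversible (clean uncomputation of $v$, of the degree ancillas, and of the selector qubit, so that no garbage stays entangled with the output), and computing $\theta$ from $(d_u,d_v)$ with $\widetilde{O}(1)$ arithmetic gates to a precision fine enough that the prepared amplitudes match $\sqrt{P'_{xy}}$ up to the required inverse-polynomial error.
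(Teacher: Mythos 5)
Your proposal is correct and follows essentially the same route as the paper's proof: the same case split on original versus edge vertices of $G'$, the same use of $\mathcal{O}_W$ plus a reversible relabelling for the $x_u$ branch, and the same degree-query-controlled rotation on a selector qubit (with the selector uncomputed from the output and the pair) for the $x_{u,v}$ branch, with matching transition probabilities $\frac{1/d_u}{1/d_u+1/d_v}=\frac{d_v}{d_u+d_v}$. The only cosmetic difference is that you build the type flag into the encoding while the paper computes and later uncomputes it in an ancilla.
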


\begin{proof}
Note that the first statement implies the second one due to \lem{QW_assumption_implementation}. Therefore, we will only prove the first statement.
Consider the following encoding of the vertices of $G'$.
$$X'=\{(u,0): u\in X\}\cup\{(u,v): \{u,v\}\in E,u<v\}\subseteq X\times \big(X\cup\{0\}\big),$$
where $0$ is a null symbol not contained in $X$. The first set of this union corresponds to original vertices of $G$ and the second one corresponds to the added ones.

To implement $U$ on $\ket{x}\ket{0}$, we first compute a bit in an ancilla register $A$ that is $\ket{0}_A$ if $x=(u,0)$ for some $u$, and $\ket{1}_A$ otherwise. We will condition on this value.

First, conditioned on $\ket{0}_A$, our implementation proceeds as follows, for $u\in X$:
\begin{align*}
\ket{x_u}\ket{0}=\ket{u,0}\ket{0} &\overset{J\mathcal{O}_W}{\mapsto} \frac{1}{\sqrt{d_u}}\sum_{v\in N(u)}\ket{u}\ket{v}\ket{u,v}\\
&\overset{J'}{\mapsto} \frac{1}{\sqrt{d_u}}\sum_{v\in N(u)}\ket{u,0}\ket{u,v}=\frac{1}{\sqrt{d_u}}\sum_{v\in N(u)}\ket{x_u}\ket{x_{u,v}}=U\ket{x_u}\ket{0}
\end{align*}

\noindent where $J$ is a unitary that acts as $\ket{u,v}\ket{0}\mapsto \ket{u,v}\ket{u,v}$,
and $J'$ as $\ket{u,v}\ket{u,v}\mapsto \ket{u,0}\ket{u,v}$,
each of which can be implemented with $O(\log (n))$ controlled-NOT gates.

Next, conditioned on $\ket{1}_A$, our implementation proceeds as follows, for $\{u,v\}\in E$ with $u<v$, and $\ket{0}_{A'}$ a fresh ancilla:
\begin{align*}
\ket{0}_{A'}\ket{x_{u,v}}\ket{0}=\ket{0}_{A'}\ket{u,v}\ket{0}  &\overset{1}{\mapsto}\left(\sqrt{\frac{1/d_u}{1/d_u+1/d_v}}\ket{0}+\sqrt{\frac{1/d_v}{1/d_u+1/d_v}}\ket{1}\right)_{A'}\ket{{u,v}}\ket{0}\\
&\overset{2}{\mapsto}\sqrt{\frac{1/d_u}{1/d_u+1/d_v}}\ket{0}_{A'}\ket{{u,v}}\ket{u}+\sqrt{\frac{1/d_v}{1/d_u+1/d_v}}\ket{1}_{A'}\ket{{u,v}}\ket{v}\\
&\overset{3}{\mapsto}\ket{0}_{A'}\left(\sqrt{\frac{1/d_u}{1/d_u+1/d_v}}\ket{u,v}\ket{u}+\sqrt{\frac{1/d_v}{1/d_u+1/d_v}}\ket{{u,v}}\ket{v}\right)\\
&= \ket{0}_{A'}\left(\sqrt{\frac{W_{x_{u,v},x_u}}{w(x_{u,v})}}\ket{x_{u,v}}\ket{x_u}+\sqrt{\frac{W_{x_{u,v},x_v}}{w(x_{u,v})}}\ket{x_{u,v}}\ket{x_v}\right)\\
&=\ket{0}_{A'} U\ket{x_{u,v}}\ket{0},
\end{align*}

\noindent where we use the following mappings:
\begin{description}
\item[$1$:] Query degrees for $u$ and $v$ into a new ancilla register, perform the rotation controlled on the degrees (cf.\ \cite{grover2002creating}), and then uncompute the degrees ($O(1)$ degree queries to $G$).
\item[$2$:] Controlled on the first register, select one of the two vertices to copy into the last register ($O(\log (n))$ Toffoli gates).
\item[$3$:] Flip the bit in $A'$ if the second vertex of $\ket{u,v}$ is the same as the one written in the third register ($O(\log (n))$ elementary gates).
\end{description}
To complete the proof, note that we can uncompute the bit in ancilla $A$, because the register containing $\ket{x}$ has not been changed.
\end{proof}

%-----------------------------------------------------------------------------
\subsection{The algorithm}\label{sec:no-tradeoff}
%-----------------------------------------------------------------------------

We can solve \textsc{ustcon}$_{\mathrm{qw}}(G)$ using \cref{alg:MH}.
This leads to our main theorem of this section.

\begin{algorithm}[ht]
  \caption{Quantum algorithm for \textsc{ustcon}$_{\text{qw}}$ with optimal time and space\label{alg:MH}}
Apply the algorithm from \thm{quantum_walk_search} to the Metropolis-Hastings walk $P'$ with $M=\{t\}$, using \lem{Szegedy_implementation} to implement the quantum walk oracle for $G'$. If the algorithm returns $t$, output ``connected'', and otherwise output ``disconnected''.
\end{algorithm}

\begin{theorem}\label{thm:single-walk}
There exists a $O(\log (n))$-space quantum algorithm that decides \textsc{ustcon}$_{\mathrm{qw}}$ and \textsc{ustcon}$_{\mathrm{s-arr}}$ with bounded one-sided error in $\widetilde{O}(n)$ gates and queries.
\end{theorem}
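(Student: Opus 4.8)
The plan is to combine the quantum walk search result (\thm{quantum_walk_search}) with the two structural facts already established about the Metropolis--Hastings graph $G'$: the hitting-time bound of \lem{MH} and the efficient oracle implementation of \lem{Szegedy_implementation}. The algorithm itself is already fixed by \cref{alg:MH}; what remains is to verify that its time and space complexities are as claimed, and that its output is correct with one-sided error. I would structure the proof around instantiating the parameters ${\sf C}$, ${\sf U}$, and ${\cal C}$ appearing in \thm{quantum_walk_search}, and then checking correctness.

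First I would set up the reduction cleanly. We run quantum walk search on the walk $P'$ on $G'$ with marked set $M=\{x_t\}$ and start vertex $s=x_s$. The check cost ${\sf C}$ is $O(\log n)$, since deciding $u\in\{x_t\}$ is just comparing vertex labels. The oracle cost ${\sf U}$ is $\widetilde O(1)$ by \lem{Szegedy_implementation}, which also guarantees it is implemented in $O(\log n)$ space and using only $\widetilde O(1)$ queries to $G$ in the sorted adjacency array model (invoking \lem{QW_assumption_implementation} to pass between the two access models). Next I would supply the commute-time bound ${\cal C}$. When $s$ and $t$ are connected, \lem{MH} gives ${\cal H}_{x_s,x_t}(G')\le 18n^2$ and, by symmetry, ${\cal H}_{x_t,x_s}(G')\le 18n^2$, so ${\cal C}_{x_s,x_t}(G')\le 36n^2$; we may therefore take ${\cal C}=36n^2$ as the required known upper bound. (One subtlety: $G'$ has $O(n+m)$ vertices, but since $\log{\cal C}=\widetilde O(1)$ and the vertices of $G'$ are encoded in $O(\log n)$ bits as in \lem{Szegedy_implementation}, all logarithmic factors stay $\polylog(n)$.)

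Plugging these into the complexity bound ${O}\bigl(\sqrt{\log{\cal C}}\,\log n+\sqrt{{\cal C}\log{\cal C}\,\log\log{\cal C}}\,({\sf C}+{\sf U})\bigr)$ gives a dominant term of order $\sqrt{{\cal C}}\cdot\widetilde O(1)=\sqrt{36n^2}\cdot\widetilde O(1)=\widetilde O(n)$, with space $O(\log n)$ as stated. I would then address correctness and the one-sided error claim directly from \thm{quantum_walk_search}: if $s$ and $t$ are connected, the search finds $x_t\in M$ with probability at least $2/3$, so the algorithm outputs ``connected''; if they are disconnected, then $x_s$ is not connected to $M$ in $G'$ (the component structure of $G'$ mirrors that of $G$, since each original edge is merely subdivided), so the algorithm is \emph{guaranteed} to output a vertex not in $M$, hence outputs ``disconnected'' with certainty. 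This is exactly one-sided error, and standard amplification (repeating $O(1)$ times) boosts the success probability on connected instances to any desired constant without affecting space or the $\widetilde O(n)$ time bound. Finally, the extension to \textsc{ustcon}$_{\mathrm{s\mbox{-}arr}}$ follows immediately from the remark after \lem{QW_assumption_implementation}.

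**The main obstacle** I anticipate is not any single hard step but rather bookkeeping of the logarithmic factors and access-model translation: one must be careful that every $\widetilde O(1)$ in \lem{Szegedy_implementation} is measured relative to $n$ (not the size of $G'$), that the commute-time bound from the one-sided hitting-time bound of \lem{MH} is applied symmetrically, and that the constant-probability guarantee of \thm{quantum_walk_search} is genuinely one-sided. Everything else is a direct substitution into the quantum walk search black box.
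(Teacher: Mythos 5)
Your proposal is correct and follows essentially the same route as the paper: instantiate \thm{quantum_walk_search} on the Metropolis--Hastings walk with ${\sf C}=O(\log n)$, ${\sf U}=\widetilde O(1)$ from \lem{Szegedy_implementation}, and ${\cal C}=36n^2$ from \lem{MH} via ${\cal C}_{s,t}={\cal H}_{s,t}+{\cal H}_{t,s}$, then argue one-sided error exactly as the paper does. The extra remarks on the size of $G'$ and on amplification are harmless but not needed.
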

\begin{proof}

Let $X_s \subseteq X$ denote the connected component of $s$.
If $t\in X_s$, the algorithm will output~$t$ with probability at least $2/3$, in which case our algorithm will output the correct answer, ``connected''. If $t\not\in X_s$, then the algorithm will output an element of $X_s$ with probability 1, in which case, our algorithm will output the correct answer ``disconnected''. This establishes correctness of \cref{alg:MH} with one-sided error.

To analyze the complexity, note that by \lem{Szegedy_implementation} we have ${\sf U} = \widetilde{O}(1).$
For any $u\in X_s$, we can check if $u\in M$ by checking if $u=t$ in complexity ${\sf C}=O(\log (n))=\widetilde{O}(1)$. To complete the analysis, we need only upper bound the commute time between $s$ and $t$ when $t \in X_s$.
Since $\mathcal{C}_{s,t}=\mathcal{H}_{s,t}+\mathcal{H}_{t,s}$, by \lem{MH}, we have $\mathcal{C}_{s,t}\leq 36n^2=:{\cal C}$. Thus, referring to \thm{quantum_walk_search}, the complexity of our algorithm is:
\begin{equation*}\widetilde{O}\left(\sqrt{{\cal C}\log({\cal C})\log(\log({\cal C}))}({\sf C}+{\sf U})\right) = \widetilde{O}(n).\qedhere
\end{equation*}
\end{proof}

%=============================================================================
\section{Time-space tradeoff for bounded spectral gap}\label{sec:tradeoff}
%=============================================================================
In this section we revisit the problem of undirected $st$-connectivity in the setting where one is given a lower bound on the spectral gap of the random walk.
As discussed in \cref{sec:random-walks}, such a bound is tightly related to the mixing time of the walk.
We will give a quantum algorithm that exhibits a nontrivial time-space tradeoff in this setting.

Our discussion will be general and apply to random walks on weighted graphs as defined in \cref{eq:transmat}.
This is useful since the spectral gaps and mixing times of random walks on $G$ with different edge weights are in general \emph{not} comparable.
E.g., on the lollipop graph (an $n$-vertex clique connected to an $O(n)$-vertex path) the mixing time of the unweighted random walk is $\Theta(n^3)$ \cite{brightwell1990maximum}, while it is $O(n^2)$ for the Metropolis-Hastings walk.\footnote{This follows from the $O(n^2)$ upper bound on the maximum hitting time of the Metropolis-Hastings walk (\cref{lem:MH}), and the fact that the maximum hitting time upper bounds the mixing time \cite[Lemma 10.2]{levin2017markov}.}
On the other hand, on an $n$-vertex star graph the unweighted random walk has mixing time $O(1)$ while the Metropolis-Hastings walk has mixing time $\Theta(n)$.
Thus, while the specific edge weights do not affect whether $s$ and $t$ are connected, they do impact the algorithm. Throughout this section, we assume some fixed edge weights are given, and we do not try to optimize for ``good'' edge weights.
More specifically, we assume access to a \emph{weighted quantum walk oracle} that for every vertex outputs a superposition of its neighbours, with squared amplitudes proportional to the edge weights:
\begin{align*}
    \mathcal O_W \colon \ket u \ket 0 \mapsto \sum_{v\in N(u)} \sqrt{\frac{W_{u,v}}{w_u}} \ket u \ket v
    = \sum_{v\in N(u)} \sqrt{P_{u,v}} \ket u \ket v
    \qquad \forall u \in X
\end{align*}

Moreover, we assume access to the weighted vertex degrees $w_u$ and that these degrees are of bounded absolute value for some poly$(n)$ bound. This will allow us to generate the state $\ket{\pi_{X'}}$ for any subset $X' \subseteq X$ stored in QCRAM.

\begin{problem}[$\textsc{ustcon}_{\text{qw},\delta}$]\label{prob:ustcon-gap}
Given access to an undirected weighted graph via the quantum walk oracle $\mathcal{O}_W$, two vertices $s,t\in X$, and the promise that either $s$ and $t$ are disconnected or the spectral gap of the transition matrix of the walk is at least some~$\delta>0$, decide which is the case.
\end{problem}

\noindent Our main result of this section is the following:

\begin{restatable}{theorem}{maingap}
\label{thm:trade-off}
Fix $\delta\geq 0$. Let ${\cal G}_n$ be a family of undirected weighted graphs $G=(X,E,W)$ with $n=|X|$, such that $\gamma_\star(G)\geq \delta$ whenever $s$ and $t$ are connected.
Then for any $S=\Omega(\log (n))$, there is a quantum algorithm that decides \textsc{ustcon}$_{\mathrm{qw},\delta}$ on ${\cal G}_n$ with bounded error in $O(S)$ space -- of which $O(\log(n/\delta))$ is quantum memory, and the remainder is QCRAM -- and $T=\widetilde{O}(\frac{S}{\delta}\log(\frac{1}{\pi_{\min}})+\sqrt{\frac{n}{\delta S}})$ queries to ${\cal O}_W$, elementary gates, and QCRAM queries.
\end{restatable}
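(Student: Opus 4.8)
The plan is to combine the quantum walk search framework of \thm{quantum_walk_search} with a spectral-gap-aware speedup, adapting the approach of \cite{apers2019qsampling}. The key idea is to sample a set of $S$ vertices according to the stationary distribution $\pi$, store them in QCRAM, and use them as a larger set of ``marked'' targets so that the relevant commute time shrinks as $S$ grows. Concretely, I would first sample $O(S)$ vertices by running quantum walk search (or approximate stationary sampling) and record them in a data structure $D$ as in \lem{QCRAM}. The crucial structural fact is that, by the promise $\gamma_\star(G) \geq \delta$ in the connected case, the random walk mixes in time $\widetilde O(\frac{1}{\delta}\log\frac{1}{\pi_{\min}})$ by \thm{mix}, so we can prepare samples approximately distributed as $\pi$ and, restricted to the component of $s$, obtain states $\ket{\pi_{X'}}$ for the stored subset $X' \subseteq X$.

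The algorithm then proceeds in two conceptual phases. In the \emph{sampling phase}, I would collect $S$ samples from $\pi$ restricted to the component of $s$; each sample costs roughly the mixing time $\widetilde O(\frac{1}{\delta}\log\frac{1}{\pi_{\min}})$, giving the $\widetilde O(\frac{S}{\delta}\log\frac{1}{\pi_{\min}})$ term in the time bound. In the \emph{search phase}, I would run the quantum walk search of \thm{quantum_walk_search} from a superposition over the $S$ collected vertices, searching for $t$. The commute time ${\cal C}_{s,M}$ to a set $M$ of $S$ randomly placed vertices under a well-mixing walk scales like $\widetilde O(\frac{n}{\delta S})$ (intuitively, hitting a random $1/S$-fraction of the mass takes $\widetilde O(\frac{1}{\delta}\cdot\frac{n}{S})$ steps once the walk mixes), so the search contributes a $\widetilde O(\sqrt{n/(\delta S)})$ term, matching the second summand. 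To decide connectivity, test whether $t$ lies in the component spanned by the sampled set: if $s$ and $t$ are connected, the sampled vertices lie in the same component and the search finds $t$ with constant probability; if disconnected, the search over $s$'s component never returns $t$, yielding one-sided (and by repetition, bounded) error.

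The space accounting is straightforward given the building blocks: the quantum walk oracle and search use $O(\log(n/\delta))$ genuine qubits by \thm{quantum_walk_search} and \lem{Szegedy_implementation}, while the $S$ stored vertices with their weights live in $O(S)$ QCRAM, accessed in $\polylog$ time per operation via \lem{QCRAM}; preparing $\ket{\pi_{X'}}$ uses operation~3 of that lemma.

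The main obstacle I anticipate is making the commute-time bound ${\cal C}_{s,M} = \widetilde O(\frac{n}{\delta S})$ rigorous when $M$ is a \emph{random} set of $\pi$-distributed vertices, and ensuring it holds with high enough probability over the sampling. The clean way to control this is to relate the commute time to an effective-resistance / spectral quantity and invoke the spectral gap promise: under a $\delta$-gap the walk equidistributes, so the expected hitting time to a random $\pi$-mass-$\Omega(S/n)$ target set is $\widetilde O(\frac{1}{\delta}\cdot\frac{n}{S})$, and a Markov/concentration argument bounds the probability that the sampled $M$ is atypically hard to reach. A secondary subtlety is that the samples are only \emph{approximately} $\pi$-distributed (up to inverse-polynomial TV distance from \thm{mix}); I would absorb this by choosing the mixing parameter $\eps$ polynomially small, so the accumulated error over $\widetilde O(1)$ repetitions stays below any constant, at the cost of only $\log$ factors hidden in $\widetilde O(\cdot)$. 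Balancing the two time terms $\frac{S}{\delta}$ and $\sqrt{n/(\delta S)}$ then recovers the claimed optimum at $S = \Theta((n\delta)^{1/3})$ advertised after \cref{thm:intro2}.
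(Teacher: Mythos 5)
Your seed-sampling phase matches the paper's step~1 almost exactly (classical random walks of length $\widetilde O(\frac1\delta\log\frac1{\pi_{\min}})$, seeds stored in QCRAM via \lem{QCRAM}, cost $\widetilde O(\frac S\delta\log\frac1{\pi_{\min}})$), and your space accounting is correct. But your second phase diverges from the paper's, and as described it has a genuine gap. You propose to run \emph{forward} quantum walk search ``from a superposition over the $S$ collected vertices, searching for $t$,'' and to charge this $\widetilde O(\sqrt{n/(\delta S)})$ by appealing to a commute-time bound ${\cal C}_{s,M}=\widetilde O(n/(\delta S))$ for a \emph{random set} $M$ of $S$ vertices. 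These two things do not match: if the marked set is the single vertex $\{t\}$, the relevant hitting/commute quantity is governed by $1/(\delta\,\pi(t))$, which is $\Omega(n/\delta)$ in general and does not decrease as $S$ grows --- having more starting seeds gives no speedup for finding one specific vertex. Conversely, if you instead mark the $S$ sampled vertices and search from the single vertex $t$ (or $s$), \thm{quantum_walk_search} charges you the commute time ${\cal C}_{t,M}={\cal H}_{t,M}+{\cal H}_{M,t}$, and the \emph{return} leg ${\cal H}_{M,t}$ to a single vertex is again up to $1/(\delta\,\pi(t))$, killing the $1/S$ gain. To make a forward-search route work you would need seeds on \emph{both} sides (your proposal only samples in the component of $s$ for this phase), a version of the search theorem with a general initial distribution $\sigma=\pi_L$ prepared from QCRAM, and a proof that $C_{\pi_L,M}=\widetilde O(n/(\delta S))$ when both $L$ and $M$ have stationary measure $\Omega(S/n)$; none of this is supplied by \thm{quantum_walk_search} as stated, and the concentration argument you defer to is exactly the missing step.

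The paper avoids this entirely: after sampling seed sets $L$ near $s$ and $M$ near $t$, it proves only that $\pi(L),\pi(M)\ge p/(8n)$ with constant probability (\cref{prp:M_measure}, via a success/failure martingale argument rather than concentration of hitting times), then uses \emph{inverse} quantum walk search (\cref{prop:inv-qw-search}) to amplify $\ket{\pi_L}$ and $\ket{\pi_M}$ into $1/8$-approximations of $\ket{\pi_{X_s}}$ and $\ket{\pi_{X_t}}$ at cost $O(1/\sqrt{\pi(L)\delta})=\widetilde O(\sqrt{n/(\delta p)})$, and finally distinguishes equal from orthogonal states with a SWAP test. This replaces the delicate commute-time analysis by a state-generation bound that depends only on $\pi(L)$ and the spectral gap, which is why the proof goes through cleanly. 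Note also that the paper's algorithm has two-sided (not one-sided) error, precisely because the SWAP test and the approximate state preparation both err in the connected direction as well.
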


Note that we can assume $\delta \geq 1/n$. If $\delta < 1/n$ then $T \approx S/\delta$. There is no time-space tradeoff, and it is always faster to run the Metropolis-Hastings algorithm (\cref{alg:MH}).

The algorithm is stated below as \cref{alg:tradeoff}.
It consists of three stages.
We fix some parameter $p$, which denotes the number of ``pebbles'', or vertices the algorithm will keep track of (so $S=O(p\log(n))$).
First, we run $O(p)$ classical random walks starting from $s$, each of length $\ell=O\left(\frac{1}{\delta}\log(\frac{n}{p\pi_{\min}})\right)$.
This allows us to sample a set $L$ of $O(p)$ points from $X_s$, the connected component of $s$ (the big-O notation suppresses a universal constant that is given in the proof). Since $\ell$ is at least the mixing time of $G$ (see \thm{mix}), assuming $s$ and $t$ are connected, each point is sampled (approximately) from $\pi$. We do the same from $t$ to get a random subset $M \subseteq X_t$ connected to $t$.

Next, we use $L$ and $M$ to prepare (up to some error) the states $\ket{\pi_{X_s}}$ and $\ket{\pi_{X_t}}$, using \emph{inverse quantum walk search}, which we describe in more detail in \sec{ST-tradeoff-state-prep}. If $s$ and $t$ are in the same connected component, then $\ket{\pi_{X_s}}=\ket{\pi_{X_t}}$, and otherwise, the states are orthogonal. The final step is to distinguish these two cases using a SWAP test.
This roughly follows an earlier approach in \cite{apers2019qsampling}, the main difference being that we sample the sets $L$ and $M$ using a random walk (which allows us to exploit the gap promise), while in \cite{apers2019qsampling} the sets are constructed using a breadth-first search.

\begin{algorithm}[ht]
  \caption{Quantum algorithm for \textsc{ustcon}$_{\text{qw}}$ with a tradeoff\label{alg:tradeoff}}
  \texttt{Seed set:}\quad Run $O(p)$ classical random walks from $s$ and $O(p)$ classical random walks from~$t$, each for $O(\frac1{\delta} \log (\frac n {p \pi_{\min}}))$ steps.
  Let $L$ and $M$ denote the respective sets of endpoints, without duplicates.
  If $L \cap M \neq \emptyset$, return ``connected''.

  \texttt{State preparation:}\quad Run inverse quantum walk search from $\ket{\pi_L}$ and $\ket{\pi_M}$ for time $\widetilde{{O}}\left(\sqrt{\frac{n}{\delta p}}\right)$ to prepare $\ket{\pi_{X_s}}$ and $\ket{\pi_{X_t}}$, respectively, to precision $1/8$.

  \texttt{SWAP test:}\quad Do a SWAP test on the resulting states.
  If the test returns ``0'', return ``connected'', otherwise return ``disconnected''.
\end{algorithm}

If we specialize \cref{thm:trade-off} to the unweighted graph case, we get the following corollary.
\begin{corollary}
For any $S > 0$, there is a quantum algorithm that solves $\textsc{ustcon}_{\mathrm{s\mbox{-}arr}}$ with a promise $\delta > 0$ on the random walk spectral gap using space $S$ and time $T \in \widetilde O(S/\delta + \sqrt{n/(\delta S)})$.
\end{corollary}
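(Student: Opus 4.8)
The plan is to obtain the corollary by specializing \thm{trade-off} to unweighted graphs and then translating the resulting algorithm from the quantum walk access model to the sorted adjacency array model. For an unweighted graph, the weighted quantum walk oracle $\mathcal{O}_W$ of \cref{prob:ustcon-gap} reduces to the simple uniform quantum walk oracle, and the weighted degrees $w_u = d_u$ are integers bounded by $n$, so the hypotheses of \thm{trade-off} hold verbatim with the family ${\cal G}_n$ taken to consist of unweighted graphs satisfying $\gamma_\star(G) \geq \delta$ whenever $s$ and $t$ are connected. Applying \thm{trade-off} therefore yields, for any $S = \Omega(\log n)$, a quantum algorithm deciding $\textsc{ustcon}_{\mathrm{qw},\delta}$ in space $O(S)$ and time $\widetilde{O}\bigl(\frac{S}{\delta}\log(\frac1{\pi_{\min}}) + \sqrt{\frac{n}{\delta S}}\bigr)$.

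Two routine simplifications then give the stated bound. First, in the unweighted case we have $\pi_{\min} \geq \frac{d_{\min}}{n\,d_{\max}} \geq \frac1{n^2}$, as recorded in \sec{random-walks}, so $\log(1/\pi_{\min}) = O(\log n)$ is absorbed into the $\widetilde{O}$ notation and the first term becomes $\widetilde{O}(S/\delta)$. Second, to pass from the quantum walk model to $\textsc{ustcon}_{\mathrm{s\mbox{-}arr}}$, I would invoke \lem{QW_assumption_implementation}: each invocation of $\mathcal{O}_W$ on the unweighted graph can be simulated by $O(\log d_{\max})$ sorted adjacency array queries together with $\widetilde{O}(1)$ elementary gates and workspace, and the degrees $d_u$ needed by the algorithm are supplied directly by the degree query $\mathcal{O}_D$. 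Since the algorithm of \thm{trade-off} makes $\widetilde{O}(S/\delta + \sqrt{n/(\delta S)})$ oracle calls, this simulation incurs only a polylogarithmic multiplicative overhead in time and an additive $\widetilde{O}(1)$ overhead in space, leaving the time bound at $\widetilde{O}(S/\delta + \sqrt{n/(\delta S)})$ and the total space at $O(S)$, as claimed.

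I do not anticipate a genuine obstacle here, as the corollary is a direct specialization; the only points requiring care are bookkeeping ones. The classical random walks in the \texttt{Seed set} stage of \cref{alg:tradeoff} must be run using array access -- each step samples a uniform $i \in [d_u]$ via $\mathcal{O}_D$ and then reads $v_i(u)$ via $\mathcal{O}_N$, costing $\widetilde{O}(1)$ per step -- but this matches the cost already charged in the quantum walk model and so does not change the asymptotics. Likewise, the QCRAM used to store the pebble sets $L$ and $M$ is internal workspace of the algorithm and is independent of how the input graph is accessed, so the space accounting of \thm{trade-off} carries over unchanged.
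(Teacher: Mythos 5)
Your proposal is correct and follows exactly the route the paper intends: the corollary is stated as a direct specialization of \thm{trade-off} to unweighted graphs, with $\log(1/\pi_{\min}) = O(\log n)$ absorbed into the $\widetilde{O}$ and \lem{QW_assumption_implementation} supplying the polylogarithmic-overhead translation from the quantum walk model to the sorted adjacency array model. The only cosmetic mismatch is that the corollary says ``for any $S>0$'' while the theorem requires $S=\Omega(\log n)$; as you implicitly do, one simply takes $S=\Theta(\log n)$ for smaller values, which does not affect the claimed bounds.
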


An analogous result holds for the Metropolis-Hastings walk described in \cref{sec:MH_description} given a promise on its spectral gap, since we showed that the corresponding quantum walk oracle can also be efficiently implemented.

In the remainder of this section we will analyze each stage of \cref{alg:tradeoff}.

\subsection{Analysis of step 1: Seed set}\label{sec:step1_analysis}

Recall that the first stage results in random sets~$L = \{ x_1,\dots,x_{c p} \}$ and $M = \{ y_1,\dots,y_{c p} \}$, where $x_1,\dots,y_{cp}$ are the endpoints of independent random walks starting at $s$ or $t$, respectively, and $c>0$ is some universal constant that we will choose later.
Since we run those random walks for $O(\frac1{\delta} \log (\frac n {p \pi_{\min}}))$ steps, by \cref{thm:mix} it follows that the $x_j$ are independent samples drawn from a distribution $\tilde\pi$ such that
\begin{align*}
  \norm{\tilde\pi - \pi}_{\TV} \leq \frac p{8n},
\end{align*}
where $\pi$ is the stationary distribution on $X_s$.
If $s$ and $t$ are connected then $X_s = X_t$ and the samples $y_j$ are similarly drawn from a distribution that is $p/(8n)$-close to $\pi$.
Here we prove that this implies lower bounds on the stationary measure of the sets $L$ and $M$.

\begin{proposition}\label{prp:M_measure}
There exists a universal constant $c>0$ such that the following holds.
Let~$p \in [n]$ and assume $L\subseteq X$ is a random set obtained by sampling $c p$ independent elements from a distribution $\tilde{\pi}$ such that $\norm{\tilde\pi - \pi}_{\TV} \leq \frac p {8n}$ (and removing duplicates).
Then, $\Pr(\pi(L) \geq \frac p{8n}) \geq \frac9{10}$.
\end{proposition}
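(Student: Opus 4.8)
The plan is to first eliminate the discrepancy between the sampling distribution $\tilde\pi$ and the measure $\pi$, and then prove a self-referential mass bound for $\tilde\pi$ alone. For the reduction I would use the variational form $\norm{\tilde\pi-\pi}_{\TV}=\max_A\abs{\tilde\pi(A)-\pi(A)}$, which applied to the random set $L$ gives $\pi(L)\geq \tilde\pi(L)-\norm{\tilde\pi-\pi}_{\TV}\geq \tilde\pi(L)-\frac{p}{8n}$. Hence it suffices to show $\tilde\pi(L)\geq \frac{p}{4n}$ with probability at least $\tfrac9{10}$, since then $\pi(L)\geq \frac{p}{4n}-\frac{p}{8n}=\frac{p}{8n}$. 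This moves everything into the distribution we actually sample from. Write $N=cp$ for the number of samples and $\tau=\frac{p}{4n}$ for the target, and note $\tau\leq\tfrac14$ because $p\leq n$.

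Next I would split on the mass carried by the \emph{heavy} vertices $H:=\{v:\tilde\pi(v)>\tau\}$. If $\tilde\pi(H)\geq \tfrac3{cp}$, then at least one of the $N$ samples lands in $H$ with probability $1-(1-\tilde\pi(H))^N\geq 1-e^{-N\tilde\pi(H)}\geq 1-e^{-3}\geq \tfrac9{10}$, and a single heavy vertex in $L$ already forces $\tilde\pi(L)>\tau$. The substantive case is $\tilde\pi(H)<\tfrac3{cp}$, where the light vertices $X\setminus H$ (each of mass at most $\tau$) carry total mass at least $1-\tfrac3{cp}\geq \tfrac9{10}$ once $c$ is large.

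In the light case I would track the accumulated mass sequentially. Let $L_j$ be the set of distinct vertices among the first $j$ samples and $\rho_j=\tilde\pi(L_j\cap(X\setminus H))$. The key computation is the per-step drift: conditioned on $L_{j-1}$ with $\rho_{j-1}<\tau$, the next sample is a specific new light vertex $v$ with probability $\tilde\pi(v)$, so $\E[\rho_j-\rho_{j-1}\mid L_{j-1}]=\sum_{v\in(X\setminus H)\setminus L_{j-1}}\tilde\pi(v)^2\geq \frac{(\tilde\pi(X\setminus H)-\rho_{j-1})^2}{n}\geq \frac{(9/10-1/4)^2}{n}$ by Cauchy--Schwarz, i.e.\ a constant times $1/n$. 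I would convert this into a bound on the stopping time $\sigma=\min\{j:\rho_j\geq\tau\}$: the process $M_j=\rho_{j\wedge\sigma}-c''(j\wedge\sigma)$, with $c''$ the drift constant, is a submartingale, so optional stopping yields $\E[\sigma\wedge N]\leq \E[\rho_{\sigma\wedge N}]/c''\leq 2\tau/c''$, where the overshoot satisfies $\rho_{\sigma}<\tau+\tau=2\tau$ precisely because light vertices have mass at most $\tau$. A Markov bound then gives $\Pr(\tilde\pi(L)<\tau)=\Pr(\sigma>N)\leq \frac{2\tau/c''}{N}=O(1/c)$, which drops below $\tfrac1{10}$ once $c$ is a large enough universal constant.

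The main obstacle is exactly this light case. A naive concentration argument (e.g.\ McDiarmid applied directly to $\tilde\pi(L)$) fails, because the bounded-difference constant is the maximum vertex mass, which can exceed the small target $\tau$; the heavy/light split together with the capping of contributions at $\tau$ (guaranteeing overshoot at most $\tau$ at the crossing time) is what forces a single universal $c$ valid across the whole range $p\in[n]$. I would finally double-check the boundary regimes $p\approx 1$ (where both $\tau$ and $N$ are small) and $p\approx n$ (where $\tau\approx\tfrac14$), since these are where the constants in the drift estimate and the Markov step are tightest.
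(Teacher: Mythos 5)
Your proof is correct, but it takes a genuinely different route from the paper's. Both arguments share the same opening reduction ($\pi(L)\geq\tilde\pi(L)-\norm{\tilde\pi-\pi}_{\TV}$, so it suffices to get $\tilde\pi(L)\geq p/(4n)$ with probability $9/10$), but the core probabilistic estimate differs. The paper proves a one-step lemma (\cref{clm:set_growth_with_a_new_sample}): as long as the accumulated mass is below $1/2$, each fresh sample increases $\tilde\pi(L_T)$ by at least $1/(4n)$ with probability at least $1/4$; it then calls such steps ``successes,'' bounds the expected gap between consecutive successes by a geometric series ($\leq 12$), and applies Markov to the time of the $p$-th success. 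This sidesteps any overshoot issue because it counts \emph{increments of size $\geq 1/(4n)$} rather than tracking the accumulated mass itself. You instead bound the \emph{expected} per-step gain by Cauchy--Schwarz, $\sum_{v\in S}\tilde\pi(v)^2\geq\tilde\pi(S)^2/n$, and run a submartingale/optional-stopping/Markov argument on the stopped mass process; this forces you to introduce the heavy/light decomposition precisely to control the overshoot $\rho_\sigma<2\tau$ at the crossing time, which you correctly identified as the crux (without the cap, a single high-mass vertex would ruin the bound $\E[\sigma\wedge N]\leq\rho_{\max}/c''$ for small $p$). Your argument is sound and uniform in $p\in[n]$ --- the drift constant $(9/10-1/4)^2/n$ and the heavy-case bound $1-e^{-3}\geq 9/10$ both check out, and a single universal $c$ (around $30$) suffices --- but the paper's success-counting argument achieves the same conclusion more economically, with no case split and no martingale machinery.
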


The proof of the proposition uses the following lemma, which formalizes the intuition that adding a random element to a low-probability subset should increase the probability.

\begin{lemma}\label{clm:set_growth_with_a_new_sample}
Let $X$ be a set of cardinality~$n$, $A \subseteq X$ an arbitrary fixed subset, and let $b$ be drawn at random from an arbitrary distribution $\sigma$ on~$X$.
Then:
\[ \Pr\mleft( \sigma(A \cup \{b\}) > \sigma(A) + \frac {1 - \sigma(A)} {2n}\mright) \geq \frac {1 - \sigma(A)} 2. \]
\end{lemma}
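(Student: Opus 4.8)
The plan is to reduce the statement to an elementary counting argument on the complement $B := X \setminus A$, which satisfies $\sigma(B) = 1 - \sigma(A)$. First I would observe that drawing $b$ changes the measure of $A$ only when $b \notin A$: if $b \in A$ then $\sigma(A \cup \{b\}) = \sigma(A)$ and the event in the lemma fails, whereas if $b \notin A$ then $\sigma(A \cup \{b\}) = \sigma(A) + \sigma(\{b\})$. Consequently the event whose probability we must lower bound is exactly
\[
  \bigl\{\, b \in B \ \text{ and }\ \sigma(\{b\}) > \tfrac{\sigma(B)}{2n} \,\bigr\},
\]
and since $b$ is drawn from $\sigma$, its probability equals $\sigma(B_{\mathrm{large}})$, where I set $B_{\mathrm{large}} := \{\, v \in B : \sigma(\{v\}) > \tfrac{\sigma(B)}{2n} \,\}$, the ``heavy'' elements of the complement.

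The key step is then to control the mass of the complementary ``light'' part $B_{\mathrm{small}} := B \setminus B_{\mathrm{large}}$. Every $v \in B_{\mathrm{small}}$ satisfies $\sigma(\{v\}) \leq \tfrac{\sigma(B)}{2n}$ by definition, and $\lvert B_{\mathrm{small}} \rvert \leq \lvert X \rvert = n$, so summing over these elements gives $\sigma(B_{\mathrm{small}}) \leq n \cdot \tfrac{\sigma(B)}{2n} = \tfrac{\sigma(B)}{2}$. Subtracting from $\sigma(B)$ then yields
\[
  \sigma(B_{\mathrm{large}}) = \sigma(B) - \sigma(B_{\mathrm{small}}) \geq \sigma(B) - \tfrac{\sigma(B)}{2} = \tfrac{1 - \sigma(A)}{2},
\]
which is precisely the claimed bound on the probability of the event.

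I do not expect a genuine obstacle here; the whole content is spotting the right truncation threshold. The threshold $\tfrac{\sigma(B)}{2n}$ is chosen exactly so that the at-most-$n$ light elements can together carry at most half the mass of $B$, forcing the heavy elements to hold the other half. The only point to handle carefully is the strict-versus-nonstrict inequality: the event is stated with ``$>$'', so $B_{\mathrm{large}}$ is defined with ``$>$'' and $B_{\mathrm{small}}$ with ``$\leq$'', and the counting bound is applied to $B_{\mathrm{small}}$, which leaves the clean ``$\geq$'' in the conclusion.
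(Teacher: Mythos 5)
Your proof is correct and is essentially the paper's argument in complementary form: the paper bounds from above the probability of the ``bad'' set $A \cup \{x : \sigma(x) \leq \tfrac{1-\sigma(A)}{2n}\}$ by $\sigma(A) + n\cdot\tfrac{1-\sigma(A)}{2n}$, while you bound from below the mass of the ``heavy'' part of $X\setminus A$ by the same counting of at most $n$ light elements. The threshold, the counting step, and the conclusion are identical.
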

\begin{proof}
Say $b$ is \emph{bad} if $b \in A$ or $\sigma(b) \leq \frac {1 - \sigma(A)} {2n}$, and \emph{good} otherwise.
Then
\begin{align*}
  \Pr\mleft( \sigma(A \cup \{b\}) > \sigma(A) + \frac {1 - \sigma(A)} {2n}\mright) = \Pr(\text{$b$ is good}) = 1 - \Pr(\text{$b$ is bad}).
\end{align*}
We can compute:
\begin{align*}
  \Pr(\text{$b$ is bad})
= \sigma\mleft( A \cup \left\{ x \in X : \sigma(x) \leq \frac {1 - \sigma(A)} {2n}\right\} \mright)
\leq \sigma(A) + n \cdot \frac {1 - \sigma(A)} {2n}
= \frac {1 + \sigma(A)} 2,
\end{align*}
from which the claim follows.
\end{proof}

\begin{proof}[Proof of \cref{prp:M_measure}]

Let $x_1, x_2, \dots$ denote samples drawn independently at random from $\tilde\pi$.
For any integer $T\geq1$, define $L_T := \{ x_1, \dots, x_T \}$ as the set consisting of the first $T$ samples (with duplicates removed), as well as $L_0 := \emptyset$.
We say that the $T$-th sample is a \emph{success} if
\begin{align*}
  \tilde\pi(L_{T-1}) \geq \frac12 \quad\text{or}\quad \tilde\pi(L_T) \geq \tilde\pi(L_{T-1}) + \frac1{4n},
\end{align*}
and a \emph{failure} otherwise.
Let $T_j$ denote the index of the~$j$-th success, with $T_0 := 0$.
Then, clearly,
\begin{align*}
  \tilde\pi(L_{T_p}) \geq \min \left\{ \frac12, \frac p {4n} \right\} = \frac p {4n}
\quad\text{and hence}\quad
  \pi(L_{T_p}) \geq \tilde\pi(L_{T_p}) - \frac{p}{8n} \geq \frac p {8n}.
\end{align*}

On the other hand, note that by \cref{clm:set_growth_with_a_new_sample}, the $T$-th sample is a success with probability at least $\frac14$, even if we condition on all prior samples.
In particular, the probability that there are~$k$ failures in a row is at most $(\frac34)^k$.
Therefore,
\begin{align*}
  \E[T_j - T_{j-1}]
= \sum_{k=1}^\infty k\Pr(T_j = T_{j-1}+k)
\leq \sum_{k=1}^\infty k\left( \frac34 \right)^k
= 12
\end{align*}

and hence, by Markov's inequality,
\begin{align*}
  \Pr(T_p > c p)
\leq \frac 1 {cp} \E[T_p]
= \frac 1 {cp} \sum_{j=1}^p \E[T_j - T_{j-1}]
\leq \frac {12p} {cp} = \frac1{10}
\end{align*}
provided we choose $c := 120$.
Since the random set $L$ in the statement of the lemma is defined by taking $cp$ many samples, we obtain that
\begin{equation*}
  \Pr\mleft( \pi(L) \geq \frac p {8n} \mright)
\geq \Pr(T_p \leq c p)
\geq 1 - \frac1{10} = \frac9{10}.
\qedhere
\end{equation*}
\end{proof}

\subsection{Analysis of step 2: State preparation}\label{sec:ST-tradeoff-state-prep}
Now we turn to the analysis of the quantum walk search routine in step~2 of the algorithm.
We rely on the following proposition from \cite{apers2019qsampling}, which formalizes the idea of ``inverse quantum walk search''.\footnote{\cite[Proposition 1]{apers2019qsampling} only proves the proposition for simple random walks, however it trivially extends to random walks on weighted graphs.
}
\begin{proposition}[{\cite[Proposition 1]{apers2019qsampling}}]
\label{prop:inv-qw-search}
Consider a subset $A \subseteq X$ of a (connected) graph $G$, and let $\delta$ be a lower bound on the spectral gap of a random walk $P$ on $G$ with stationary distribution $\pi$.
From $\ket{\pi_A}$, we can generate a state $\ket{\tilde\pi} = \ket{\pi} + \ket{\Gamma}$ with $\| \ket{\Gamma} \|_2 \leq \epsilon$ using an expected number of calls to the weighted quantum walk oracle
\[
O\left( \frac{1}{\sqrt{\pi(A) \delta}}
    \log\left( \frac{1}{\pi(A) \epsilon} \right) \right),
\]
and $O(1/\sqrt{\pi(A) \delta})$ reflections around $\ket{\pi_A}$. The algorithm uses space logarithmic in $n$, $1/\delta$, $1/\pi(A)$ and $1/\epsilon$.
\end{proposition}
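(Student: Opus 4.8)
The plan is to realize this proposition as an \emph{amplitude amplification} that rotates the given state $\ket{\pi_A}$ into the full stationary state $\ket{\pi}$, using the quantum walk only to build an approximate reflection about $\ket{\pi}$. The single identity that drives everything is the overlap between start and target: since $\ket{\pi_A}$ carries amplitude $\sqrt{\pi(u)/\pi(A)}$ on $u\in A$ while $\ket{\pi}$ carries $\sqrt{\pi(u)}$, we get $\braket{\pi}{\pi_A}=\pi(A)/\sqrt{\pi(A)}=\sqrt{\pi(A)}$. Thus in the two-dimensional real subspace spanned by $\ket{\pi}$ and the component of $\ket{\pi_A}$ orthogonal to it, the overlap onto the target is exactly $\sqrt{\pi(A)}$, so a Grover-type rotation should reach $\ket{\pi}$ after $O(1/\sqrt{\pi(A)})$ steps. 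Both the start reflection $R_{\pi_A}=2\ketbra{\pi_A}{\pi_A}-I$ and the target reflection $R_{\pi}=2\ketbra{\pi}{\pi}-I$ preserve this subspace, so the whole analysis reduces to a planar rotation.

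The only nontrivial ingredient is $R_\pi$, which I would obtain from the Szegedy quantum walk operator $W(P)$ associated with $P$. For a reversible, connected, aperiodic chain, $\ket{\pi}$ is the unique eigenvector of $W(P)$ with eigenphase $0$, and Szegedy's spectral correspondence converts the spectral gap $\delta$ of $P$ into a \emph{phase gap} $\Omega(\sqrt{\delta})$ of $W(P)$: every other eigenphase has magnitude at least $\Omega(\sqrt\delta)$. Running phase estimation on $W(P)$ to resolution below this phase gap, flagging the ``phase $=0$'' branch, flipping its sign, and uncomputing yields a unitary $\tilde R_\pi$ that is $\epsilon'$-close in operator norm to $R_\pi$, at a cost of $O\!\left(\frac{1}{\sqrt\delta}\log(1/\epsilon')\right)$ calls to the weighted walk oracle $\mathcal O_W$ (the logarithm sharpens the phase-estimation tails). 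The registers used are the $O(\log n)$ walk registers together with an $O(\log(1/\delta))+O(\log(1/\epsilon'))$-qubit phase register, which already accounts for the claimed space being logarithmic in $n$, $1/\delta$, $1/\pi(A)$, and $1/\epsilon$.

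Next I would run amplitude amplification using the given reflection $R_{\pi_A}$ and the approximate $\tilde R_\pi$. With exact reflections, $\tilde R_\pi R_{\pi_A}$ acts in the planar subspace as a rotation by $2\arcsin\sqrt{\pi(A)}\approx 2\sqrt{\pi(A)}$, so $O(1/\sqrt{\pi(A)})$ iterations carry $\ket{\pi_A}$ to within any fixed constant of $\ket{\pi}$, after which the target precision is reached by standard fine-tuning. Because $\tilde R_\pi$ replaces $R_\pi$, the errors accumulate linearly: after $O(1/\sqrt{\pi(A)})$ iterations the deviation from the ideal run is $O(\epsilon'/\sqrt{\pi(A)})$, so choosing $\epsilon'=\Theta(\epsilon\sqrt{\pi(A)})$ keeps the final error at most $\epsilon$. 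Since $\log(1/\epsilon')=O\!\left(\log\frac{1}{\pi(A)\epsilon}\right)$, the totals become $O(1/\sqrt{\pi(A)})$ reflections about $\ket{\pi_A}$ and $O\!\left(\frac{1}{\sqrt{\pi(A)\delta}}\log\frac{1}{\pi(A)\epsilon}\right)$ calls to $\mathcal O_W$, matching the statement (the reflection count lies within the stated $O(1/\sqrt{\pi(A)\delta})$ bound, as $\delta\leq 1$). Finally, since $\pi(A)$ is a priori unknown, the exact number of rounds cannot be fixed in advance; I would wrap the procedure in the usual exponential search over iteration counts (or a fixed-point amplification scheme), which is precisely what turns the bound into an \emph{expected} cost.

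The hard part will be the error analysis of amplitude amplification with an imperfect reflection: one must show that replacing $R_\pi$ by an $\epsilon'$-approximation perturbs the trajectory only linearly in the number of iterations, so that a mere logarithmic overhead—rather than a polynomial blow-up—suffices, and that the phase-estimation flag does not leak amplitude out of the relevant two-dimensional subspace. This in turn rests on controlling the eigenphase tails of $W(P)$ near $0$ and confirming the cleanness of the $\Omega(\sqrt\delta)$ phase gap; the remaining pieces (the overlap computation, Szegedy's correspondence, and the planar rotation count) are routine. Since the footnote in \cite{apers2019qsampling} already notes the extension from simple to weighted walks is immediate, nothing in this plan changes in the weighted setting beyond using the weighted oracle $\mathcal O_W$ of \thm{quantum_walk_search}.
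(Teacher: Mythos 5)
The paper does not prove this proposition itself: it is imported verbatim from \cite[Proposition~1]{apers2019qsampling}, with only a footnote observing that the argument extends trivially from simple to weighted walks. Your reconstruction---amplitude amplification from $\ket{\pi_A}$ toward $\ket{\pi}$ using the overlap $\braket{\pi}{\pi_A}=\sqrt{\pi(A)}$, with the reflection about $\ket{\pi}$ realized approximately by phase estimation on the Szegedy walk operator exploiting the $\Omega(\sqrt{\delta})$ phase gap, error parameter $\epsilon'=\Theta(\epsilon\sqrt{\pi(A)})$ to absorb the linear accumulation, and an exponential search over iteration counts to handle the unknown $\pi(A)$ (whence the \emph{expected} cost)---is exactly the standard ``inverse quantum walk search'' argument underlying the cited result, so it matches the intended proof.
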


The proposition implies the following.

\begin{claim}\label{clm:step2_tradeoff}
Step 2 of \cref{alg:tradeoff} prepares $1/8$-approximations of $\ket{\pi_{X_s}}$ and $\ket{\pi_{X_t}}$ with probability at least $9/10$ in time complexity $O\left( \sqrt{\frac{n}{p\delta}} \log\left(\frac{n}{p}\right) \right)$.
\end{claim}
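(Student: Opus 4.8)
The plan is to feed the sets $L$ and $M$ produced by Step 1 into \cref{prop:inv-qw-search}, invoking it once with $A = L$ to prepare $\ket{\pi_{X_s}}$ and once with $A = M$ to prepare $\ket{\pi_{X_t}}$, and then to substitute the measure bound of \cref{prp:M_measure} directly into the resulting cost formula. First I would fix the universal constant $c$ in \cref{prp:M_measure} large enough that its Markov estimate gives a failure probability of at most $1/20$ per set (replacing $c=120$ by $c=240$); a union bound over the two independent samplings then guarantees $\pi(L) \ge \frac{p}{8n}$ and $\pi(M) \ge \frac{p}{8n}$ simultaneously with probability at least $9/10$. Here $\pi$ is the stationary distribution on the relevant component, and in the connected case $X_s = X_t = X$, so the promise $\gamma_\star(G) \ge \delta$ supplies exactly the spectral-gap hypothesis required by \cref{prop:inv-qw-search}.

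Conditioned on this good event, I would apply \cref{prop:inv-qw-search} with $A = L$ and $\epsilon = \tfrac18$. It returns a state $\ket{\pi_{X_s}} + \ket{\Gamma}$ with $\norm{\ket{\Gamma}}_2 \le \tfrac18$, i.e.\ a $1/8$-approximation of $\ket{\pi_{X_s}}$, using $O\!\left(\frac{1}{\sqrt{\pi(L)\,\delta}}\log\!\left(\frac{1}{\pi(L)\,\epsilon}\right)\right)$ calls to the weighted quantum walk oracle and $O\!\left(1/\sqrt{\pi(L)\,\delta}\right)$ reflections around $\ket{\pi_L}$. Since this cost is monotonically decreasing in $\pi(L)$, substituting $\pi(L) \ge \frac{p}{8n}$ and $\epsilon = \tfrac18$ bounds the number of oracle calls by $O\!\left(\sqrt{\tfrac{8n}{p\delta}}\,\log(\tfrac{64n}{p})\right) = O\!\left(\sqrt{\tfrac{n}{p\delta}}\,\log(\tfrac np)\right)$ and the number of reflections by $O\!\left(\sqrt{n/(p\delta)}\right)$. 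Running the identical routine with $A = M$ to obtain a $1/8$-approximation of $\ket{\pi_{X_t}}$ only doubles these counts.

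It then remains to charge the cost of the reflections and of preparing the seed states $\ket{\pi_L}$ and $\ket{\pi_M}$. I would store $L$ (resp.\ $M$) together with the integer weighted degrees $w_u$ in the data structure of \cref{lem:QCRAM} with $k = O(\log n)$ and $\ell = O(p)$; this prepares $\ket{\pi_L} = \frac{1}{\sqrt{w(L)}}\sum_{u \in L}\sqrt{w_u}\,\ket{u}$ in time $\widetilde{O}(1)$, and each reflection around $\ket{\pi_L}$ is then realized as $V\,(2\ket{0}\bra{0} - I)\,V^\dagger$ with $V$ the preparation unitary, again at cost $\widetilde{O}(1)$. Thus the $O(\sqrt{n/(p\delta)})$ reflections contribute only polylogarithmic overhead and the running time is dominated by the $O\!\left(\sqrt{n/(p\delta)}\,\log(n/p)\right)$ oracle calls, matching the claim.

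The step I expect to require the most care is the bookkeeping around two points rather than any individual estimate: (i) justifying the spectral-gap hypothesis of \cref{prop:inv-qw-search} — immediate when $s,t$ are connected since then $X_s = X$ and $\gamma_\star(G) \ge \delta$, while in the disconnected case I would observe that the quantum walk keeps each seed state confined to its own component, so the two outputs have disjoint supports and the subsequent SWAP test succeeds regardless of approximation quality; and (ii) turning the \emph{expected} oracle-call count of \cref{prop:inv-qw-search} into a fixed runtime, which I would do by truncating the routine at a constant multiple of its expectation and folding the resulting $O(1)$ failure probability into the constant $c$ so that the $1/8$-approximation guarantee and the time bound hold together with probability at least $9/10$.
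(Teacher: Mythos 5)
Your proposal follows essentially the same route as the paper's proof: bound $\pi(L),\pi(M)\geq \frac{p}{8n}$ via \cref{prp:M_measure}, prepare $\ket{\pi_L},\ket{\pi_M}$ in $\widetilde{O}(1)$ time via \cref{lem:QCRAM}, and substitute these bounds with $\epsilon=1/8$ into the cost formula of \cref{prop:inv-qw-search}. You are in fact somewhat more careful than the paper on two minor points it glosses over -- taking a union bound (with an adjusted constant $c$) so that \emph{both} sets satisfy the measure bound with probability $9/10$ rather than each individually, and truncating the expected oracle-call count of \cref{prop:inv-qw-search} to get a worst-case time bound -- both of which are legitimate refinements rather than deviations.
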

\begin{proof}
First, step $2$ prepares superpositions $\ket{\pi_L}$ and $\ket{\pi_M}$.
Applying \lem{QCRAM} with $l=n$ (upper bound on the set size) and $k=\log(n)$ (number of bits necessary to write down a vertex index) and the assumption that weighted vertex degrees $w_u$ are of bounded absolute value for some poly$(n)$ bound, shows that this superpositions preparation can be done in $\widetilde{O}(1)$ time.
By \cref{prp:M_measure}, we have that $\pi(L),\pi(M) \geq p/(8n)$ with probability at least $9/10$.
By \cref{prop:inv-qw-search} and our preceding remark, we can then prepare $1/8$-approximations of $\ket{\pi_{X_s}}$ and $\ket{\pi_{X_t}}$ in time
\begin{equation*}
O\left( \sqrt{\frac{n}{p\delta}} \log\left(\frac{n}{p}\right) \right).\qedhere
\end{equation*}
\end{proof}

\subsection{Analysis of step 3: SWAP test}\label{sec:step3_analysis}
In the last step of our algorithm, we wish to decide whether $\ket{\pi_{X_s}} = \ket{\pi_{X_t}}$ or whether they are orthogonal.
For this we use the SWAP test.
\begin{claim}\label{clm:step3_tradeoff}
Step 3 of \cref{alg:tradeoff} decides whether $\ket{\pi_{X_s}} = \ket{\pi_{X_t}}$ or whether they are orthogonal with constant probability in time $\widetilde{{O}}(1)$.
\end{claim}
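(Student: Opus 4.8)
The plan is to analyze the standard SWAP test applied to the two states produced in step~2, and to show that its acceptance probability is separated by a constant between the connected and disconnected cases. Recall that the SWAP test on two normalized states $\ket a$ and $\ket b$, each held on $O(\log n)$ qubits, uses one ancilla qubit, one controlled-SWAP across the two registers, and two Hadamards, and outputs ``$0$'' with probability $\frac12\bigl(1 + \lvert\langle a|b\rangle\rvert^2\bigr)$; since the controlled-SWAP on $O(\log n)$ qubits costs $O(\log n)$ gates, a single test runs in $\widetilde O(1)$ time. First I would record that, by \cref{clm:step2_tradeoff}, with probability at least $9/10$ step~2 outputs normalized states $\ket a,\ket b$ with $\norm{\ket a - \ket{\pi_{X_s}}} \le \frac18$ and $\norm{\ket b - \ket{\pi_{X_t}}} \le \frac18$.

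For the connected case I would use that $\ket{\pi_{X_s}} = \ket{\pi_{X_t}}$, so by the triangle inequality $\norm{\ket a - \ket b} \le \frac14$; combined with the identity $\norm{\ket a-\ket b}^2 = 2 - 2\,\mathrm{Re}\,\langle a|b\rangle$ for unit vectors this yields $\mathrm{Re}\,\langle a|b\rangle \ge 1-\frac1{32}$ and hence $\lvert\langle a|b\rangle\rvert^2 \ge \frac{15}{16}$, so the test returns ``$0$'' with probability at least $\frac{31}{32}$. For the disconnected case I would observe that the key point is \emph{exact} orthogonality: inverse quantum walk search starting from $\ket{\pi_L}$ (supported on $L\subseteq X_s$) only ever applies the walk oracle of $G$ and reflections around $\ket{\pi_L}$, all of which preserve connected components, so $\ket a$ stays supported on $X_s$ and $\ket b$ on $X_t$; as $X_s \cap X_t = \emptyset$ these have disjoint basis support and $\langle a|b\rangle = 0$ regardless of the preparation error. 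Thus the test returns ``$0$'' with probability exactly $\frac12$.

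With the two acceptance probabilities ($\ge \frac{31}{32}$ versus $=\frac12$) separated by a constant, I would finish by amplifying: repeat the state preparation and SWAP test a constant number $r=O(1)$ of times, let $f$ be the empirical fraction of ``$0$'' outcomes, and declare ``connected'' iff $f \ge \frac34$. A Hoeffding bound shows that $r=O(1)$ repetitions suffice to classify correctly with any desired constant probability; the SWAP tests themselves contribute only $r\cdot\widetilde O(1) = \widetilde O(1)$ time (the re-preparation cost being charged to step~2). The main thing to get right is the propagation of the $\frac18$ preparation error into the inner product: a single test sits exactly at the $\frac12$ boundary in the disconnected case, so the argument hinges both on the clean observation that disconnectedness forces \emph{exact} orthogonality (which removes any reliance on the gap promise, as it does not hold when $s$ and $t$ are disconnected) and on the constant-gap amplification, rather than on the outcome of one test.
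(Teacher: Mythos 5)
Your proof is correct and follows the same basic route as the paper (bound the overlap of the two prepared states in each case and feed it into the standard SWAP-test acceptance formula), but it is more careful on two points where the paper's own proof is loose. First, in the disconnected case the paper bounds $\abs{\langle \tilde\pi_{X_s} | \tilde\pi_{X_t} \rangle} < 1/3$ by invoking the $1/8$-approximation guarantee of \cref{clm:step2_tradeoff}; strictly speaking that guarantee rests on \cref{prop:inv-qw-search} and hence on the spectral gap promise, which by \cref{prob:ustcon-gap} need not hold when $s$ and $t$ are disconnected. Your support argument --- the walk oracle and the reflections around $\ket{\pi_L}$ never leave $X_s$, so the two outputs are \emph{exactly} orthogonal whatever the preparation error --- sidesteps this and is the right way to make the disconnected case airtight. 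Second, you correctly observe that a single SWAP test accepts with probability $1/2$ (or up to $5/9$ under the paper's bound) in the disconnected case, so declaring ``connected'' on outcome ``$0$'' errs with probability at least $1/2$ there; the paper's one-line conclusion that the test ``distinguishes these cases with constant probability'' really only gives a constant bias, and your explicit $O(1)$-fold repetition with a threshold is needed to turn this into bounded error as claimed in \cref{thm:trade-off} (at the cost of a constant-factor rerun of step~2, which is harmless). Both refinements strengthen rather than deviate from the paper's argument; the numerical constants you derive ($\geq 31/32$ versus $=1/2$) are correct.
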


\begin{proof}

Using a single copy of two states $\ket{\psi}$ and $\ket{\psi'}$, and $O(\log (n))$ additional gates, the SWAP test returns ``0'' with probability $(1+|\langle \psi|\psi' \rangle|^2)/2$ and ``1'' with probability $(1-|\langle \psi|\psi'\rangle|^2)/2$ \cite[Claim 3.1]{aharonov2003adiabatic}.

By \cref{clm:step2_tradeoff}, in step 2 we prepared states $\ket{\tilde\pi_{X_s}} = \ket{\pi_{X_s}} + \ket{\Gamma_L}$ and $\ket{\tilde\pi_{X_t}} = \ket{\pi_{X_t}} + \ket{\Gamma_M}$ such that $\|\ket{\Gamma_L}\|_2,\|\ket{\Gamma_M}\|_2 \leq 1/8$ with probability $9/10$.
By a triangle inequality, this implies that
\[
\Big| |\langle \tilde\pi_{X_s} | \tilde\pi_{X_t} \rangle |
- | \langle \pi_{X_s} | \pi_{X_t} \rangle | \Big|
< 1/3,
\]
and so $|\langle \tilde\pi_{X_s} | \tilde\pi_{X_t} \rangle | > 2/3$ if $s$ and $t$ are connected, but $|\langle \tilde\pi_{X_s} | \tilde\pi_{X_t} \rangle | < 1/3$ otherwise.
The SWAP test distinguishes these cases with constant probability.
\end{proof}

\subsection{Proof of Theorem~\ref{thm:trade-off}}\label{sec:trade-off-proof}

In this section, we prove \thm{trade-off}, which we restate here for convenience.

\maingap*
\begin{proof}
We first analyze the space complexity of \cref{alg:tradeoff}. Step 1 is purely classical, and uses $O(p\log (n))$ space to store the $O(p)$ vertices in $L$ and $M$, with each random walk using $O(\log (n))$ space. We can implement step 2 using the algorithm referred to in \cref{prop:inv-qw-search} using $O(\log (n))$ qubits of space, but this requires that the $O(p\log (n))$ classical space used to store $L$ and $M$ in step 1 is QCRAM. Finally, step 3 just uses $O(\log (n))$ quantum space. Thus, the claimed space complexity follows if we set $S=p\log (n)$.

Next, we analyze the time complexity. Every random walk of step 1 adds $O\left(\frac{1}{\delta}\log(\frac{n}{p\pi_{\min}})\right)$ to the time complexity. The total number of walks is $O(p)$. Checking whether $L\cap M=\emptyset$ is $O(p)$ as this is the total number of samples. Hence, the overall complexity of the first step is $\widetilde{O}\left(\frac{p}{\delta}\log(\frac{1}{\pi_{\min}})\right)$.
By \clm{step2_tradeoff}, the complexity of step 2 is $O\left( \sqrt{\frac{n}{p\delta}} \log\left(\frac{n}{p}\right) \right)$. Finally, the SWAP test in step 3 uses only $O(\log (n))$ gates, since the states being compared are $O(\log (n))$-qubit states. Hence, the total time complexity of \cref{alg:tradeoff} is $$T=\widetilde{O}\left(\frac{p}{\delta}\log\left(\frac{1}{\pi_{\min}}\right)+\sqrt{\frac{n}{\delta p}}\right)=\widetilde{O}\left(\frac{S}{\delta}\log\left(\frac{1}{\pi_{\min}}\right)+\sqrt{\frac{n}{\delta S}}\right),$$
since $S=\widetilde{O}(p)$.

Finally, for the correctness of the algorithm, by \clm{step3_tradeoff}, \cref{alg:tradeoff} distinguishes between the case where $\ket{\pi_{X_s}}$ and $\ket{\pi_{X_t}}$ are equal and the case where they are orthogonal with bounded error. If $X_s=X_t$ (i.e.~$s$ and $t$ are connected) then the states are equal, and if $X_s\cap X_t=\emptyset$ (i.e.~$s$ and $t$ are not connected) then they are orthogonal.
\end{proof}

\bibliographystyle{alpha}
\bibliography{refs}

\newcommand{\etalchar}[1]{$^{#1}$}
\newcommand{\lName}{1}\newcommand{\arxiv}[1]{arXiv:
  \href{https://arxiv.org/abs/#1}{\ttfamily{#1}}\removefirstdot}\newcommand{\arXiv}[1]{arXiv:
  \href{https://arxiv.org/abs/#1}{\ttfamily{#1}}\removefirstdot}\def\removefirstdot#1{\if.#1{}\else#1\fi}\providecommand{\multiletter}[1]{#1}\renewcommand{\multiletter}[1]{#1}\DeclareRobustCommand{\dutchPrefix}[2]{#2}\providecommand{\dutchPrefix}[2]{#2}\renewcommand{\dutchPrefix}[2]{#2}\newcommand{\skp}[3]{#2}\newcommand{\focs
  }[1]{\if\lName1\skp{ }{Proceedings of the #1 {IEEE} Symposium on Foundations
  of Computer Science ({FOCS})}{ }\else{FOCS}\fi}\newcommand{\stoc
  }[1]{\if\lName1\skp{ }{Proceedings of the #1 {ACM} Symposium on the Theory of
  Computing ({STOC})}{ }\else{STOC}\fi}\newcommand{\soda }[1]{\if\lName1\skp{
  }{Proceedings of the #1 {ACM-SIAM} Symposium on Discrete Algorithms
  ({SODA})}{ }\else{SODA}\fi}\newcommand{\stacs }[1]{\if\lName1\skp{
  }{Proceedings of the #1 Symposium on Theoretical Aspects of Computer Science
  ({STACS})}{ }\else{STACS}\fi}\newcommand{\itcs }[1]{\if\lName1\skp{
  }{Proceedings of the #1 Innovations in Theoretical Computer Science
  Conference (ITCS)}{ }\else{ITCS}\fi}\newcommand{\fsttcs }[1]{\if\lName1\skp{
  }{Proceedings of the #1 International Conference on Foundations of Software
  Technology and Theoretical Computer Science (FSTTCS)}{
  }\else{FSTTCS}\fi}\newcommand{\mfcs }[1]{\if\lName1\skp{ }{Proceedings of the
  #1 International Symposium on Mathematical Foundations of Computer Science
  ({MFCS})}{ }\else{MFCS}\fi}\newcommand{\ccc }[1]{\if\lName1\skp{
  }{Proceedings of the #1 {IEEE} Conference on Computational Complexity
  ({CCC})}{ }\else{CCC}\fi}\newcommand{\isit }[1]{\if\lName1\skp{ }{Proceedings
  of the #1 {IEEE} International Symposium on Information Theory ({ISIT})}{
  }\else{ISIT}\fi}\newcommand{\colt }[1]{\if\lName1\skp{ }{Proceedings of the
  #1 Conference On Learning Theory (COLT)}{ }\else{COLT}\fi}\newcommand{\nips
  }[1]{\if\lName1\skp{ }{Advances in Neural Information Processing Systems #1
  ({NIPS})}{ }\else{NIPS}\fi}\newcommand{\aistats }[1]{\if\lName1\skp{
  }{Proceedings of the #1 International Conference on Artificial Intelligence
  and Statistics ({AISTATS})}{ }\else{AISTATS}\fi}\newcommand{\icml
  }[1]{\if\lName1\skp{ }{Proceedings of the #1 International Conference on
  Machine Learning (ICML)}{ }\else{ICML}\fi}\newcommand{\icalp
  }[1]{\if\lName1\skp{ }{Proceedings of the #1 International Colloquium on
  Automata, Languages, and Programming (ICALP)}{
  }\else{ICALP}\fi}\newcommand{\esa }[1]{\if\lName1\skp{ }{Proceedings of the
  #1 Annual European Symposium on Algorithms (ESA)}{
  }\else{ESA}\fi}\newcommand{\tqc }[1]{\if\lName1\skp{ }{Proceedings of the #1
  Conference on the Theory of Quantum Computation, Communication, and
  Cryptography (TQC)}{}\else{TQC}\fi}\newcommand{\jacm }{\if\lName1\skp{
  }{Journal of the ACM}{ }\else{J. ACM}\fi}\newcommand{\acmta }{\if\lName1\skp{
  }{ACM Transactions on Algorithms}{ }\else{{ACM} Tr.
  Alg}\fi}\newcommand{\acmtct }{\if\lName1\skp{ }{ACM Transactions on
  Computation Theory}{ }\else{ACM Tr. Comp. Th.}\fi}\newcommand{\jams
  }{\if\lName1\skp{ }{Journal of the AMS}{ }\else{J. AMS}\fi}\newcommand{\pams
  }{\if\lName1\skp{ }{Proceedings of the AMS}{ }\else{Proc.
  AMS}\fi}\newcommand{\linalgappl }{\if\lName1\skp{ }{Linear Algebra and its
  Applications}{ }\else{Lin. Alg. \& App.}\fi}\newcommand{\jalgo
  }{\if\lName1\skp{ }{Journal of Algorithms}{ }\else{J.
  Alg.}\fi}\newcommand{\jcss }{\if\lName1\skp{ }{Journal of Computer and System
  Sciences}{ }\else{J. Comp. Sys. Sci.}\fi}\newcommand{\cc }{\if\lName1\skp{
  }{Computational Complexity}{ }\else{Comp. Comp.}\fi}\newcommand{\algor
  }{\if\lName1\skp{ }{Algorithmica}{ }\else{Alg.}\fi}\newcommand{\comb
  }{\if\lName1\skp{ }{Combinatorica}{ }\else{Comb.}\fi}\newcommand{\cacm
  }{\if\lName1\skp{ }{Communications of the ACM}{ }\else{Comm.
  ACM}\fi}\newcommand{\sigart }{\if\lName1\skp{ }{SIGART Bulletin}{
  }\else{SIGART Bull.}\fi}\newcommand{\sigactn }{\if\lName1\skp{ }{SIGACT
  News}{ }\else{SIGACT News}\fi}\newcommand{\eatcsbul }{\if\lName1\skp{
  }{Bulletin of the {EATCS}}{ }\else{Bull. {EATCS}}\fi}\newcommand{\siamrev
  }{\if\lName1\skp{ }{SIAM Review}{ }\else{SIAM Rev.}\fi}\newcommand{\siamjc
  }{\if\lName1\skp{ }{SIAM Journal on Computing}{ }\else{SIAM J.
  Comp.}\fi}\newcommand{\siamjo }{\if\lName1\skp{ }{SIAM Journal on
  Optimization}{ }\else{SIAM J. Opt.}\fi}\newcommand{\siamjdm }{\if\lName1\skp{
  }{SIAM Journal on Discrete Mathematics}{ }\else{SIAM J. Disc.
  Math.}\fi}\newcommand{\siamjnum }{\if\lName1\skp{ }{SIAM Journal on Numerical
  Analysis}{ }\else{SIAM J. Num. Anal.}\fi}\newcommand{\siamjmathanal
  }{\if\lName1\skp{ }{SIAM Journal on Mathematical Analysis}{ }\else{SIAM J.
  Math. Anal.}\fi}\newcommand{\discmath }{\if\lName1\skp{ }{Discrete
  Mathematics}{ }\else{Disc. Math.}\fi}\newcommand{\das }{\if\lName1\skp{
  }{Discrete Applied Mathematics}{ }\else{Disc. App.
  Math.}\fi}\newcommand{\amatstat }{\if\lName1\skp{ }{Annals of Mathematical
  Statistics}{ }\else{Ann. Math. Stat.}\fi}\newcommand{\rms }{\if\lName1\skp{
  }{Russian Mathematical Surveys}{ }\else{Russ. Math.
  Surv.}\fi}\newcommand{\invmath }{\if\lName1\skp{ }{Inventiones Mathematicae}{
  }\else{Inv. Math.}\fi}\newcommand{\jnumber }{\if\lName1\skp{ }{Journal of
  Number Theory}{ }\else{J. Num. Th.}\fi}\newcommand{\toc }{\if\lName1\skp{
  }{Theory of Computing}{ }\else{Th. Comp.}\fi}\newcommand{\cjtcs
  }{\if\lName1\skp{ }{Chicago Journal of Theoretical Computer
  Science}{}\else{Chic. J. Th. Comp. Sci.}\fi}\newcommand{\tocsys
  }{\if\lName1\skp{ }{Theory of Computing Systems}{}\else{Theory Comput.
  Syst.}\fi}\newcommand{\quantum }{\if\lName1\skp{ }{{Quantum}}{
  }\else{Quant.}\fi}\newcommand{\cmp }{\if\lName1\skp{ }{Communications in
  Mathematical Physics}{ }\else{Comm. Math. Phys.}\fi}\newcommand{\jmp
  }{\if\lName1\skp{ }{Journal of Mathematical Physics}{ }\else{J. Math.
  Phys.}\fi}\newcommand{\rspa }{\if\lName1\skp{ }{Proceedings of the Royal
  Society A}{ }\else{Proc. Roy. Soc. A}\fi}\newcommand{\qic }{\if\lName1\skp{
  }{Quantum Information and Computation}{ }\else{Quant. Inf. \&
  Comp.}\fi}\newcommand{\physrev }{\if\lName1\skp{ }{Physical Review}{
  }\else{Phys. Rev.}\fi}\newcommand{\pra }{\if\lName1\skp{ }{Physical Review
  A}{ }\else{Phys. Rev. A}\fi}\newcommand{\prb }{\if\lName1\skp{ }{Physical
  Review B}{ }\else{Phys. Rev. B}\fi}\newcommand{\pre }{\if\lName1\skp{
  }{Physical Review E}{ }\else{Phys. Rev. E}\fi}\newcommand{\prr
  }{\if\lName1\skp{ }{Physical Review Research}{ }\else{Phys. Rev.
  Research}\fi}\newcommand{\prx }{\if\lName1\skp{ }{Physical Review X}{
  }\else{Phys. Rev. X}\fi}\newcommand{\prl }{\if\lName1\skp{ }{Physical Review
  Letters}{ }\else{Phys. Rev. Lett.}\fi}\newcommand{\njp }{\if\lName1\skp{
  }{New Journal of Physics}{ }\else{New J. Phys.}\fi}\newcommand{\prapp
  }{\if\lName1\skp{ }{Physical Review Applied}{ }\else{Phys. Rev.
  Appl.}\fi}\newcommand{\physrep }{\if\lName1\skp{ }{Physics Reports}{
  }\else{Phys. Rep.}\fi}\newcommand{\rmp }{\if\lName1\skp{ }{Reviews of Modern
  Physics}{ }\else{Rev. Mod. Phys. }\fi}\newcommand{\phystoday
  }{\if\lName1\skp{ }{Physics Today}{ }\else{Phys.
  Today}\fi}\newcommand{\physics }{\if\lName1\skp{ }{Physics}{
  }\else{Phys.}\fi}\newcommand{\nature }{\if\lName1\skp{ }{Nature}{
  }\else{Nat.}\fi}\newcommand{\natcomm }{\if\lName1\skp{ }{Nature
  Communications}{ }\else{Nat. Comm.}\fi}\newcommand{\natphys }{\if\lName1\skp{
  }{Nature Physics}{ }\else{Nat. Phys.}\fi}\newcommand{\npjqi }{\if\lName1\skp{
  }{npj Quantum Information}{ }\else{npj Quant. Inf.}\fi}\newcommand{\scirep
  }{\if\lName1\skp{ }{Scientific Reports}{ }\else{Sci.
  Rep.}\fi}\newcommand{\science }{\if\lName1\skp{ }{Science}{
  }\else{Sci.}\fi}\newcommand{\jpa }{\if\lName1\skp{ }{Journal of Physics A:
  Mathematical and Theoretical}{ }\else{J. Phys. A}\fi}\newcommand{\ijtp
  }{\if\lName1\skp{ }{International Journal of Theoretical Physics}{
  }\else{Int. J. Th. Phys.}\fi}\newcommand{\jmo }{\if\lName1\skp{ }{Journal of
  Modern Optics}{ }\else{J. Mod. Opt.}\fi}\newcommand{\jstatph
  }{\if\lName1\skp{ }{Journal of Statistical Physics}{ }\else{J. Stat.
  Phys.}\fi}\newcommand{\pnas }{\if\lName1\skp{ }{Proceedings of the National
  Academy of Sciences}{ }\else{PNAS}\fi}\newcommand{\lncs }{\if\lName1\skp{
  }{Lecture Notes in Computer Science}{ }\else{L. Notes Comp.
  Sci.}\fi}\newcommand{\lnai }{\if\lName1\skp{ }{Lecture Notes in Artificial
  Intelligence}{ }\else{L. Notes Art. Int.}\fi}\newcommand{\lnm
  }{\if\lName1\skp{ }{Lecture Notes in Mathematics}{ }\else{L. Notes
  Math.}\fi}\newcommand{\tams }{\if\lName1\skp{ }{Transactions of the American
  Mathematical Society}{ }\else{Trans. AMS}\fi}\newcommand{\ieeetit
  }{\if\lName1\skp{ }{{IEEE} Transactions on Information Theory}{ }\else{{IEEE}
  Trans. Inf. Th.}\fi}\newcommand{\iscs }{\if\lName1\skp{ }{International
  Series in Computer Science}{ }\else{Int. Ser. Comp.
  Sci.}\fi}\newcommand{\tocl }{\if\lName1\skp{ }{Theory of Computing Library}{
  }\else{Th. Comp. Lib.}\fi}
\begin{thebibliography}{DHHM06}

\bibitem[AGJ20]{QWS}
Simon Apers, Andr{\'a}s Gily{\'e}n, and Stacey Jeffery.
\newblock A unified framework of quantum walk search.
\newblock In {\em \stacs{38th}}, pages 6:1--6:13, 2020.
\newblock \arxiv{1912.04233}.

\bibitem[AKL{\etalchar{+}}79]{aleliunas1979ustconn}
Romas Aleliunas, Richard~M. Karp, Richard~J. Lipton, Laszlo Lovasz, and Charles
  Rackoff.
\newblock Random walks, universal traversal sequences, and the complexity of
  maze problems.
\newblock In {\em \focs{20th}}, pages 218--223, 1979.

\bibitem[Ape19]{apers2019qsampling}
Simon Apers.
\newblock Quantum walk sampling by growing seed sets.
\newblock In {\em \esa{27th}}, volume 144, pages 9:1--9:12. Springer, 2019.

\bibitem[ATS03]{aharonov2003adiabatic}
Dorit Aharonov and Amnon Ta-Shma.
\newblock Adiabatic quantum state generation and statistical zero knowledge.
\newblock In {\em Proceedings of the thirty-fifth annual ACM symposium on
  Theory of computing}, pages 20--29, 2003.

\bibitem[BBC{\etalchar{+}}01]{beals2001QLowerBoundPoly}
Robert Beals, Harry Buhrman, Richard Cleve, Michele Mosca, and Ronald
  {\dutchPrefix{Wolf}{d}}e~Wolf.
\newblock Quantum lower bounds by polynomials.
\newblock {\em \jacm}, 48(4):778--797, 2001.
\newblock Earlier version in FOCS'98. \arxiv{quant-ph/9802049}.

\bibitem[BBR{\etalchar{+}}90]{beame1990time}
Paul Beame, Allan Borodin, Prabhakar Raghavan, Walter~L Ruzzo, and Martin
  Tompa.
\newblock Time-space tradeoffs for undirected graph traversal.
\newblock In {\em \focs{1990}}, pages 429--438. IEEE, 1990.

\bibitem[Bel13]{belovs2013electric}
Aleksandrs Belovs.
\newblock Quantum walks and electric networks.
\newblock \arxiv{1302.3143}, 2013.

\bibitem[BF93]{barnes1993short}
Greg Barnes and Uriel Feige.
\newblock Short random walks on graphs.
\newblock In {\em \stoc{1993}}, pages 728--737, 1993.

\bibitem[BKRU89]{broder1989trading}
Andrei~Z Broder, Anna~R Karlin, Prabhakar Raghavan, and Eli Upfal.
\newblock Trading space for time in undirected st-connectivity.
\newblock In {\em \stoc{1989}}, pages 543--549, 1989.

\bibitem[BR12]{BR12}
Aleksandrs Belovs and Ben~W. Reichardt.
\newblock Span programs and quantum algorithms for $st$-connectivity and claw
  detection.
\newblock In {\em \esa{20th}}, pages 193--204, 2012.

\bibitem[BW90]{brightwell1990maximum}
Graham Brightwell and Peter Winkler.
\newblock Maximum hitting time for random walks on graphs.
\newblock {\em Random Structures \& Algorithms}, 1(3):263--276, 1990.

\bibitem[DHHM06]{DHHM}
Christoph D\"urr, Mark Heiligman, Peter H\o{}yer, and Mehdi Mhalla.
\newblock Quantum query complexity of some graph problems.
\newblock {\em \siamjc}, 35(6):1310--1328, 2006.
\newblock Earlier version in ICALP'04. \arxiv{quant-ph/0401091}.

\bibitem[Edm93]{edmonds1993time}
Jeff Edmonds.
\newblock Time-space trade-offs for undirected st-connectivity on a {JAG}.
\newblock In {\em \stoc{1993}}, pages 718--727, 1993.

\bibitem[Fei93]{feige1993randomized}
Uriel Feige.
\newblock A randomized time-space tradeoff of $\tilde{O}(m \hat{R})$ for
  {USTCON}.
\newblock In {\em \focs{1993}}, pages 238--246. IEEE, 1993.

\bibitem[FGGS98]{farhi1998parity}
Edward Farhi, Jeffrey Goldstone, Sam Gutmann, and Michael Sipser.
\newblock Limit on the speed of quantum computation in determining parity.
\newblock {\em \prl}, 81(24):5442, 1998.
\newblock \arxiv{quant-ph/9802045}.

\bibitem[GR02]{grover2002creating}
Lov Grover and Terry Rudolph.
\newblock Creating superpositions that correspond to efficiently integrable
  probability distributions.
\newblock {\em arXiv preprint quant-ph/0208112}, 2002.

\bibitem[Kos13]{MH}
Adrian Kosowski.
\newblock Faster walks in graphs: A $\tilde{O}(n^2)$ time-space trade-off for
  undirected s-t connectivity.
\newblock In {\em \soda{2013}}, pages 1873--1883, 2013.
\newblock \arxiv{1204.1136v2}.

\bibitem[KP17]{kerenidis2016quantum}
Iordanis Kerenidis and Anupam Prakash.
\newblock Quantum recommendation systems.
\newblock In {\em \itcs{8th}}, pages 49:1--49:21. Schloss
  Dagstuhl--Leibniz-Zentrum fuer Informatik, 2017.

\bibitem[LHP{\etalchar{+}}20]{lemieux2020efficient}
Jessica Lemieux, Bettina Heim, David Poulin, Krysta Svore, and Matthias Troyer.
\newblock Efficient quantum walk circuits for metropolis-hastings algorithm.
\newblock {\em Quantum}, 4:287, 2020.

\bibitem[LPW17]{levin2017markov}
David~A Levin, Yuval Peres, and Elizabeth~L Wilmer.
\newblock {\em {M}arkov chains and mixing times}.
\newblock American Mathematical Society, 2017.
\newblock second edition.

\bibitem[Rei08]{reingold2008SL}
Omer Reingold.
\newblock Undirected connectivity in log-space.
\newblock {\em \jacm}, 55(4), 2008.

\bibitem[Wig92]{wigderson1992complexity}
Avi Wigderson.
\newblock The complexity of graph connectivity.
\newblock In {\em International Symposium on Mathematical Foundations of
  Computer Science}, pages 112--132. Springer, 1992.

\end{thebibliography}

\end{document}